\documentclass[12pt]{iopart}

%Uncomment next line if AMS fonts required
%\usepackage{iopams} 
%\newtheorem{theorem}{Theorem}
\usepackage{graphicx}
%%%%The following two commands removes the \equation* definition from iopart so that there isn't a clash when amsmath is imported %%%%%%%%%%%%
\expandafter\let\csname equation*\endcsname\relax
\expandafter\let\csname endequation*\endcsname\relax
%%%%%%%%%%%%%%%%%%% After above two lines import the amsmath package ie., \usepackage{amsmath} %%%%%%%%%%%%%%%%%%%%%%%%%%%%%%%%%%%%%%%%%%%%%%%
\usepackage{amsmath}
\usepackage{amsthm}  
\usepackage{amsfonts}
\usepackage{amssymb} 
\newtheorem{thm}{Theorem}[section]

\newtheorem{definition}[thm]{Definition}

 \usepackage{hyperref}
  \usepackage{xcolor}
\begin{document}

\title[New multi-hump exact solitons of a KdV system with conformable derivative]{New multi-hump exact solitons of a coupled Korteweg-de-Vries system with conformable derivative describing shallow water waves via RCAM}

\author{Prakash Kumar Das  %\& Services Team
}

\address{Department of Mathematics, Trivenidevi Bhalotia College,\\ Raniganj, Paschim Bardhaman,
West Bengal, India-713347}
\ead{prakashdas.das1@gmail.com}
\vspace{10pt}
\begin{indented}
\item[]August 2020
\end{indented}

\begin{abstract}
In this article, a modification of the rapidly convergent approximation method is proposed to solve a  coupled Korteweg-de Vries equations with conformable derivative that govern shallow-water waves. Based on the Leibniz and chain rule of conformable derivative, these equations reduced into ODEs with integer-order using traveling wave transformation. Adopting the modified scheme  a new novel exact solution of the reduced coupled ordinary differential equations is obtained in terms of exponential functions. Finally, by putting them back into traveling wave transformation the solutions of the considered  partial differential equations with conformable derivative are derived. To ensure the boundedness of the derived solutions few theorems have been proposed and proved. The derived results of the theorems are utilized to plot the solutions. Graphics exhibit that solutions have variant multi-hump soliton peculiarities and their tails decay to zero exponentially in a monotonic manner. These results not only show the efficiency of the modified scheme but also establish that the solution is enriched with new multi-hump features. 
\end{abstract}

%
% Uncomment for keywords
%\vspace{2pc}
\noindent{\it Keywords}: A coupled KdV equations, Conformable derivative, Rapidly convergent approximation method, Exact solutions, Boundedness, Multi-hump solitons 
%
% Uncomment for Submitted to journal title message
%\submitto{\JPA}
%
% Uncomment if a separate title page is required
%\maketitle
% 
% For two-column output uncomment the next line and choose [10pt] rather than [12pt] in the \documentclass declaration
%\ioptwocol
%

\section{Introduction}
\label{intro}
One-pulse solutions  are  common and significant features of dispersive partial differential equations recounting physical systems.  Beyond that, there may exist two-pulse, three-pulse, and generally n-pulse (multi-hump) solutions for such physical systems having a higher-order dispersion ( involving  higher-order differential operators). Moreover, when a spatial soliton is built of multiple modes of the initiated waveguide, its  soliton intensity profile may be compounded and display several peaks.  The multi-pulse solutions constitute with multiple copies of the one-pulse solutions separated by finitely many oscillations close to the zero equilibrium. These solutions have been derived numerically for various nonlinear models as discussed below.

Numerically it is showed that there may exist infinitely many multi-pulse solutions  to the fifth order Korteweg -de Vries (KdV) equation \cite{buffoni1996global,groves1998solitary}. Also, the existence of an infinite (but countable) set of localized stationary solutions,  experimentally and numerically established for nonlinear generalised KdV equation in \cite{gorshkov1979existence}. For solitary waves governed by incoherent beam interaction in a saturable medium Ostrovskaya et al., \cite{ostrovskaya1999stability}  disclosed that two-hump solitary waves are linearly stable in a wide region of their existence, but all three-hump solitons are linearly unstable. For nonintegrable multi-component nonlinear models formation of stable multi-hump solitary waves, a novel physical mechanism  has  been presented in \cite{ostrovskaya2001multi}. These multi-hump optical solitons observed numerically  in the models describing laser radiation copropagating with a Bose-Einstein condensate and used to shed light on the phenomenon of jet emission from a condensate interacting with a laser \cite{cattani2011multihump}.

Besides numerical methods, few direct methods also exist in literature to investigate multi-hump solitons. In \cite{parra2019multi} the Hirota bilinear method has been employed  to find one- and two-hump exact bright and dark soliton solutions to a coupled system between the linear Schr$\ddot{\text{o}}$dinger and KdV equations. Using  the Lax pair and Darboux transformation (DT), the authors in \cite{wang2016breather} have derived  multi-peak soliton.  Most of the multi-pick solitons studied in literature are using numerical schemes. So it is of great importance to propose a new method which can derive exact multi-hump soliton solutions efficiently and a simple way than existing direct methods. To attain the goal here for the first time the rapidly convergent approximation method (RCAM) has been employed to construct exact multi-hump solutions of a coupled differential equations with conformable derivative.

Non-linear frational differential equations (NLFDE) model abundant branches of applied mathematics describing evolution of physical processes in science and engineering \cite{magin2006fractional,engheta1996fractional,schneider1989fractional,chen2005fractional,meral2010fractional,he1997semi,he2004variational,banerjee2017study,das2018time,kilbas2006theory,kumar2019new,kumar2020analysis,prakash2019analysis,prakash2018new,goyal2020regarding}. The availability of solutions of such equations contribute  a lot in visualize the fractional system involved. In lots of cases, it is arduous task to derive exact solution to these equations. Despite that reserchers have proposed a few direct methods \cite{aslan2019optical,inc2018dark,yusuf2020optical,khalil2014new,pandir2018analytical,ellahi2018exact,sabi2019new,houwe2020solitary,rezazadeh2019hyperbolic,osman2019traveling} to look for  exact travelling wave solutions of NLFDE. 
The above mention direct methods often make few assumptions of the form of the solution, as a result, they always unable to produce a profound type solution. In addition to that, these methods methodology always require to solve a system of nonlinear algebraic equations, which is a hard task and becomes improbable when nonlinearity increases. Whenever these systems of equations are solvable, they lead to some particular solutions. As a consequence, these methods yield a set of special solutions instead of a general one. Also,  there are many approximation methods \cite{hashemi2020three,korpinar2020theory,korpinar2020residual,korpinar2018numerical,yusuf2019invariant,yong2008numerical,atangana2013time,biswas2019approximate,osman2019new,arqub2020numerical,kumar2019hybrid,goswami2019efficient,prakash2019numerical} to deal with equations for which the direct methods do not work.  These approximate methods are often found to be slowly convergent and unable to provide the close form of the series solution. These problems can be easily tackled by the RCAM  \cite{das2015improved,das2016rapidly,das2018rapidly,das2018solutions,das2018piecewise,das2019some,das2019rapidly,das2020chirped}. In this article, this scheme is used to obtain a new multi-hump travelling wave solution of a coupled Korteweg-de Vries equations  with conformable derivative.

The famous couple KdV equation is a nonlinear frequency dispersion equation, which describes shallow long-wave and small-amplitude phenomena, ion acoustic waves in a plasma, acoustic waves on a crystal lattice and so on. For the time being its generalisation containing  time-fractional derivative getting great deal attention to the researchers whose form is:
\begin{align}\label{KDVe0}
 & D_t^{\alpha}u(x,t)-p\ u_{xxx}(x,t)-p_1 \ u(x,t)\ u_x(x,t) -p_2 \ v(x,t)\ v_x(x,t)=f(x,t), \nonumber \\
 & D_t^{\alpha}v(x,t)+q \ v_{xxx}(x,t)+q_1 \ u(x,t)\ v_x(x,t)=g(x,t),\ 0< \alpha \leq 1,
\end{align}
where  $D_{*}^{\alpha}$ denote the Caputo type differential operator and $f(x,t)$ and $g(x,t)$ are two source terms respectively.
Eq. (\ref{KDVe0}) play  an important role in the propagation of waves and have significant contribution in many applied fields such as fluid, mechanics, plasmas, crystal lattice vibrations at low temperatures, etc. The Eq. (\ref{KDVe0}) without source term ($f(x,t)=g(x,t)=0$) and $p=a,\ p_1 =6a,\ p_2=2b, \ q=-1,\ q_1=3$ was first proposed and obtained its approximate series solution by using Adomian decomposition method  in \cite{yong2008numerical}. Latter this equation was considered by many authors \cite{jin2010new,atangana2013time,ray2013new,bulut2015numerical} and they have applied the generalised differential transform method, homotopy decomposition method, fractional reduced differential transform method and Haar wavelets respectively to obtain the approximate solutions. Matinfar et al. \cite{matinfar2015functional} applied  the functional variable method for deriving the exact solution of  the said equations.
Three numerical technique based on the shifted Legendre polynomials, Meshless spectral method and spectral collection method were applied in \cite{bhrawy2016numerical,hussain2019meshless,albuohimad2018numerical} for obtaining the  numerical solution of the Eq. (\ref{KDVe0}) with source terms.
In a recent work  S. Biswas et al., \cite{biswas2019approximate} proposed another variant of coupled fractional KdV equation containing both space-time fractional derivative. They derived it with the help of a semi-inversion method, variational principle,  and  Lagrangian of the KdV equation. In addition, the authors employed the homotopy analysis method to derive the approximate solution of the equation. The said equation enjoy the form
\begin{align}\label{KDVe00}
 & D_t^{\alpha}u(x,t)+6 \ a \ u(x,t)\ D_x^{\alpha}u(x,t)-2\ b\ v(x,t)\ D_x^{\alpha}v(x,t)+a\ D_x^{3\alpha}u(x,t)=0, \nonumber \\
 & D_t^{\alpha}v(x,t)+3\ u(x,t)\ D_x^{\alpha}v(x,t)+D_x^{3\alpha}v(x,t)=0,\ \ 2<3\alpha<3.
\end{align}
Equations  (\ref{KDVe0}) and  (\ref{KDVe00}) are defined using  Caputo  derivative, which  do not satisfy some main principles of classical integer order derivative  \cite{tarasov2016chain,tarasov2013no} such as Leibniz rule, chain rule and etc. So it is not straightforward to derive exact solutions to these equations involving this derivative. Hence in this article, we consider the equation (\ref{KDVe00})  with conformable  derivative in the form
 \begin{align}\label{KDVe1}
 & u^{(\beta)}_t(x,t)+6 \ a \ u(x,t)\ u^{(\alpha)}_x(x,t)-2\ b\ v(x,t)\ v^{(\alpha)}_x(x,t)+a\ u^{(\alpha)}_{xxx}(x,t)=0, \nonumber \\
 & v^{(\beta)}_t(x,t)+3\ u(x,t)\ v^{(\alpha)}_x(x,t)+v^{(\alpha)}_{xxx}(x,t)=0,\ \ 0<\alpha, \  \beta \leq 1.
\end{align}
Since the system (\ref{KDVe1}) contains conformal derivatives is equivalent to the classical system of integer-order KdV-type equations with variable coefficients of a special kind \cite{tarasov2018no}. This  coupled KdV equations with variable coefficients represents a simple generalization of Hirota-Satsuma coupled KdV equations  \cite{hirota1981soliton, zhou2003periodic}, describe the interaction of two long waves \cite{zhou2003periodic, xie2004exact,singh2006lie} and also has many applications in the above mentioned fields. Also in the case $\alpha=\beta=1$, the above equations reduces to conventional Hirota-Satsuma coupled KdV equations.   In this assignment, the author's intention is to construct the exact multi-hump solutions of Eq.(\ref{KDVe1}) by utilizing RCAM.

%\cite{ghany2013fractional} Variable coefficients and Wick-type stochastic fractional coupled KdV equations are investigated. By using the modified
%fractional sub-equation method, Hermite transform, and white noise theory the exact travelling wave solutions and
%white noise functional solutions are obtained, including the generalized exponential, hyperbolic, and trigonometric types.

Section \ref{sec2} presents the basic properties of  conformable derivative. Basic methodologies of RCAM are introduced in section \ref{sec3}. Using this scheme a class of new travelling wave solutions for coupled KdV equation  with conformable derivative were presented in section  \ref{sec4}. The boundedness conditions of solutions of the KdV equation  with conformable derivative and its reduced ordinary differential equations  have been presented in section \ref{sec5} and verified through plotting them. The outlook of the present work have been summarized and some concluding remarks given in section \ref{sec6}.		
\section{Properties of conformable derivative}\label{sec2}
This section presents few definitions and features of conformable derivative \cite{khalil2014new,abdeljawad2015conformable}:
%\textbf{Definition:}
\begin{definition}The conformable derivative of a function $f : [0,\infty ) \rightarrow \mathbb{R}$ of order $\alpha$ is defined by
\begin{eqnarray}\label{FDe1}
T_{\alpha}(h)(t)= \lim_{\epsilon \rightarrow 0} \frac{h\left( t+\epsilon\ t^{1-\alpha} \right)-h(t)}{\epsilon},
\end{eqnarray}
for all $t>0, \ \alpha \in (0,1).$ If f is $\alpha$-differentiable in some $(0, a), a > 0,$ and $ \lim_{t \rightarrow 0+} f^{(\alpha)}(t)$ exists, then define
 $f^{(\alpha)}(0)=\lim_{t \rightarrow 0+} f^{(\alpha)}(t).$
\end{definition}
 Sometimes, we use the notation $f^{(\alpha)}(t)$ in place of $T_{\alpha}(f)(t)$, to prevail the conformable derivatives of $f$ of order $\alpha$. Few significant features of conformable  derivative are presented below:\\
If  $\alpha \in (0,1],$ and $f, g$ be $\alpha$-differentiable at a point $t>0$ then we have
\begin{itemize}
\item[1.]   $T_{\alpha} \left( a \ f+b\ g\right)=a\ T_{\alpha}(f)+b\ T_{\alpha}(g),$ for all $a,b \in \mathbb{R}.$ \\
\item[2.]  $T_{\alpha}\left( t^p \ \right)=p\ t^{p-\alpha}$ for all $p \in \mathbb{R}.$\\
\item[3.]  $T_{\alpha}(\lambda)=0$, for all constant functions $f(t)=\lambda.$\\
\item[4.]  $T_{\alpha}\left(f g \right)=f\ T_{\alpha}(g)+g\ T_{\alpha}(f).$ \\
\item[5.]  $T_{\alpha}\left(\frac{f}{g}\right)=\frac{g\ T_{\alpha}(f)+f\ T_{\alpha}(g)}{g^2}$.\\
\item[6.] If $f$ is differentiable, then  $T_{\alpha}(f)(t)=t^{1-\alpha} \frac{df}{dt}(t).$
\end{itemize}
\begin{thm}[Chain Rule \cite{abdeljawad2015conformable,eslami2016exact,chen2018simplest}] Let $f,g : (0 , \infty)\rightarrow \mathbb{R}$ be two  differentiable  functions and also $f$ is $\alpha$-differentiable, then, one has the following rule:\\
 $T_{\alpha}( fog ) ( t ) = t^{ 1-\alpha} \ g'( t ) f'( g ( t ) ).$
 \end{thm}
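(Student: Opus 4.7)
The cleanest route is to bypass the $\epsilon$-limit in the definition and reduce everything to the ordinary chain rule by invoking property 6, which says that whenever a function $h$ is (ordinary) differentiable at $t>0$, its conformable derivative exists and equals $T_{\alpha}(h)(t)=t^{1-\alpha}h'(t)$. Under the stated hypotheses both $f$ and $g$ are differentiable on $(0,\infty)$, so the composition $f\circ g$ is also differentiable by the classical chain rule, with $(f\circ g)'(t)=f'(g(t))\,g'(t)$. Applying property 6 to $h=f\circ g$ then gives immediately
\begin{equation*}
T_{\alpha}(f\circ g)(t)=t^{1-\alpha}(f\circ g)'(t)=t^{1-\alpha}g'(t)\,f'(g(t)),
\end{equation*}
which is exactly the claimed identity. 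This is the plan I would execute first.

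If instead one insists on a proof straight from the limit definition, I would start from
\begin{equation*}
T_{\alpha}(f\circ g)(t)=\lim_{\epsilon\to 0}\frac{f\bigl(g(t+\epsilon t^{1-\alpha})\bigr)-f(g(t))}{\epsilon},
\end{equation*}
set $\Delta_\epsilon:=g(t+\epsilon t^{1-\alpha})-g(t)$, and factor the difference quotient as
\begin{equation*}
\frac{f(g(t)+\Delta_\epsilon)-f(g(t))}{\Delta_\epsilon}\cdot\frac{\Delta_\epsilon}{\epsilon t^{1-\alpha}}\cdot t^{1-\alpha}.
\end{equation*}
By continuity of $g$, $\Delta_\epsilon\to 0$, so the first factor tends to $f'(g(t))$; the second factor is an ordinary difference quotient of $g$ at $t$ with step $h=\epsilon t^{1-\alpha}\to 0$, and tends to $g'(t)$. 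Multiplying yields the result.

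The only subtlety in the limit-based argument is the legitimacy of dividing by $\Delta_\epsilon$ when it may vanish on arbitrarily small $\epsilon$. The standard remedy, which I would state briefly, is to split cases: if $g'(t)\neq 0$ then $\Delta_\epsilon\neq 0$ for all sufficiently small $\epsilon$ and the factorization is valid; if $g'(t)=0$ the product structure still goes through by using the auxiliary function $\varphi(y):=\frac{f(y)-f(g(t))}{y-g(t)}$ for $y\neq g(t)$ and $\varphi(g(t)):=f'(g(t))$, which is continuous at $g(t)$. This is the only mildly technical step, and it is precisely the same technicality that appears in the proof of the ordinary chain rule, so the first approach (via property 6) is preferable because it inherits that technicality for free.
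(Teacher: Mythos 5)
Your proposal is correct. The paper itself does not prove this theorem --- it is quoted from the cited references --- so there is no in-paper argument to compare against; but your first route (apply property 6 to $h=f\circ g$ and then use the classical chain rule) is exactly the one-line deduction that the paper's own list of properties makes available, and it is the cleanest proof under the stated hypotheses, since both $f$ and $g$ are assumed differentiable in the ordinary sense. Your second, limit-based argument is the one that appears in the cited literature, and you correctly identify and patch the only delicate point (the possible vanishing of $\Delta_\epsilon$) via the auxiliary function $\varphi$. Two minor remarks: the hypothesis that $f$ be $\alpha$-differentiable is not actually needed in your first argument (ordinary differentiability of $f$ suffices, and indeed implies $\alpha$-differentiability at points $y>0$ by property 6); and, strictly speaking, for $f\circ g$ and for $f'(g(t))$ to make sense one needs $g(t)$ to lie in the domain $(0,\infty)$ of $f$, a point the statement itself glosses over and which you inherit silently. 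Neither affects the validity of your argument.
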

\section{The modified RCAM} \label{sec3}
Consider a system of equations in the form
\begin{eqnarray}\label{eq2p1}
 {\cal X}^{'''}-{\cal A}^{2}.\; {\cal X}^{'}={\cal N},
\end{eqnarray}
where   ${\cal X }, {\cal A }$ and ${\cal N}$  are matrix of   dependent variables, constant coefficients and nonlinear terms  respectively, given by 
 $${\cal X}=\left[\begin{array}{c}U_1(x)\\U_2(x) \\ \vdots \\U_k(x) \\\end{array}\right] , \ \ {\cal A}=\left[\begin{array}{c}
 \lambda_1\ \ 0\ \ \cdots 0 \\ 0\ \ \lambda_2 \ \ \cdots 0 \\ \vdots \\ 0\ \ 0\ \ \cdots \ \ \lambda_k  \\\end{array}\right], \ \text{and}  \ {\cal N}=\left[\begin{array}{c}
 N_1\left(U_1(x),\cdots,U_k(x)\right) \\ N_2\left(U_1(x),\cdots,U_k(x)\right) \\ \vdots \\N_k\left(U_1(x),\cdots,U_k(x)\right) \\\end{array}\right].$$ 
Note that here considered all  $\lambda_i, \ i=1,2, \cdots, k$ are distinct. Which indicate the modification over the existing scheme  \cite{das2019rapidly}.  To find the solution of (\ref{eq2p1}), we recast it  in exponential matrix operator form
\begin{equation}\label{eq2p2}
\hat{{\cal O}}[{\cal X}](x) = {\cal N},
\end{equation}
where linear exponential matrix operator can be recast in the form
\begin{equation}\label{eq2p3}
\hat{{\cal O}}[\cdot](x)  = e^{ {\cal A}.x}\frac{d}{dx}\left(e^{- 2 {\cal A}.x}\frac{d}{dx}\left(e^{ {\cal A}.x}\frac{d}{dx}[\cdot]\right)\right).
\end{equation}
The inverse operator $\hat{{\cal O}}^{-1}$ of the operator $\hat{{\cal O}}[](x)$ is  given by
\begin{eqnarray}\label{eq2p4}
\hat{{\cal O}}^{-1}[\cdot](x)  = \int^{x} e^{- {\cal A}.x^{'}}\int^{x^{'}} e^{2 {\cal A}.x^{''}}\int^{x^{''}} e^{- {\cal A}.x^{'''}}[\cdot] dx^{'''}  dx^{''} dx^{'}.
\end{eqnarray}
Operating $\hat{{\cal O}}^{-1}$ on $\hat{{\cal O}}[ {\cal X}](x)$ and using integration by parts produces
\begin{eqnarray}\label{eq2p5}
\hat{{\cal O}}^{-1}\left( {\cal X}^{'''}- {\cal A}^{2}.\;  {\cal X}^{'}\right)  = {\cal X}- {\cal C}. e^{ {\cal A}. x}- {\cal D}.e^{- {\cal A}. x}-{\cal E},
\end{eqnarray}
where ${\cal C}=\left[\begin{array}{c}c_1 \\c_2 \\ \vdots \\c_k \\\end{array}\right]$, ${\cal D}=\left[\begin{array}{c}
 d_1 \\  d_2 \\ \vdots  \\d_k \\\end{array}\right]$ and ${\cal E}=\left[\begin{array}{c}
 e_1 \\ e_2 \\ \vdots  \\e_k \\\end{array}\right]$ are integration constant matrices.
Operating $\hat{{\cal O}}^{-1}$  both sides of (\ref{eq2p2}) and using (\ref{eq2p5}), gives
\begin{equation}\label{eq2p6}
{\cal X} ={\cal E}+{\cal C}.e^{{\cal A}. x} +{\cal D}.e^{-{\cal A}. x}+\hat{\cal O}^{-1}[{\cal N}](x),
\end{equation}
involving three arbitrary constants matrices ${\cal C}$, ${\cal D}$ and ${\cal E}$. To derive the terms involving the unknown ${\cal X}$ in R.H.S of (\ref{eq2p6}), we express them in the form
\begin{eqnarray}\label{eq2p7}
{\cal X} & \cong &\left[\begin{array}{c}U_1(x) \\ U_2(x) \\ \vdots  \\U_k(x) \\\end{array}\right]= \sum_{m=0}^\infty \left[\begin{array}{c}U_{1,m}(x) \\ U_{2,m}(x) \\ \vdots  \\U_{k,m}(x) \\\end{array}\right],
\end{eqnarray}
and
${\cal N}=\sum_{m=0}^\infty \Delta_m (x),$
with
\begin{eqnarray}\label{eq2p8}
 \Delta_m \cong \left[\begin{array}{c}\Delta_{1,m}(x) \\ \Delta_{2,m}(x) \\ \vdots  \\\Delta_{k,m}(x) \\ \end{array}\right] = \frac{1}{m!}\left[\frac{d^m}{d\epsilon^m}\left[\begin{array}{c}N_{1}(x) \\N_{2}(x) \\ \vdots \\N_{k}(x) \\\end{array}\right]\right]_{\epsilon=0},
\end{eqnarray}
and $N_{j}(x), \ j=1,2, \cdots,k$ are given by  $$N_{j}(x)=\left(\sum_{k=0}^\infty U_{1,k}\epsilon^k,\sum_{k=0}^\infty U_{2,k}\epsilon^k, \cdots  ,\sum_{k=0}^\infty U_{k,k}\epsilon^k \right).$$ The terms $\Delta_{i,m} (x)= \Delta_{i,m} (U_{1,0}(x),U_{1,1}(x), …..,U_{1,m}(x), \cdots, U_{k,0}(x),U_{k,1}(x)$ $,…..,U_{k,m}(x)),\ i=1,2,\cdots ,k$ are  Adomian polynomials \cite{adomian1994solving,duan2011new,adomian1983inversion,adomian1993analytic,adomian1993new,adomian1994modified} turned out from the formula (\ref{eq2p8}).
Use of (\ref{eq2p8}) in (\ref{eq2p9}) gives
\begin{eqnarray}\label{eq2p11}
&&{\cal X} =  \left[ \begin{array}{c} c_{1}\; e^{\lambda_1 x}+ d_{1} \; e^{-\lambda_1 x}+e_{1} \\ c_{2}\; e^{\lambda_2 x}+ d_{2} \; e^{-\lambda_2 x}+e_{2} \\ \vdots \\ c_{k}\; e^{\lambda_k x}+ d_{k} \; e^{-\lambda_k x}+ e_{k} \\\end{array}\right] + \hat{\cal O}^{-1}\left[\sum_{m=0}^\infty  \left[\begin{array}{c}\Delta_{1,m}(x) \\ \Delta_{2,m}(x) \\ \vdots  \\\Delta_{k,m}(x) \\ \end{array}\right] \right]. \nonumber \\
\end{eqnarray}
We follow the steps given in \cite{das2019rapidly}, to obtain the higher order correction term as
\begin{eqnarray}\label{eq2p9}
&{\cal X}_0 \cong \left[\begin{array}{c}U_{1,0}(x)\\U_{2,0}(x)\\ \vdots \\U_{k,0}(x) \\\end{array}\right]= \left[\begin{array}{c} c_{1}\; e^{\lambda_1 x}+ d_{1} \; e^{-\lambda_1 x}+e_{1} \\ c_{2}\; e^{\lambda_2 x}+ d_{2} \; e^{-\lambda_2 x}+e_{2} \\ \vdots \\ c_{k}\; e^{\lambda_k x}+ d_{k} \; e^{-\lambda_k x}+ e_{k} \\\end{array}\right],
\end{eqnarray}
\begin{eqnarray}\label{eq2p10}
&  {\cal X}_{n+1} \cong \left[\begin{array}{c}U_{1,n+1}(x)\\ U_{2,n+1}(x)\\  \vdots \\U_{k,n+1}(x) \\\end{array}\right]= \hat{\cal O}^{-1}\left[\begin{array}{c}\Delta_{1,n}(x) \\ \Delta_{2,n}(x) \\ \vdots  \\\Delta_{k,n}(x) \\ \end{array}\right],\ n \geq 0.
\end{eqnarray}
In the following, we present a few special cases of iterative formulas (\ref{eq2p9})-(\ref{eq2p10}).\\  
\textbf{Case-I} In case $ \lambda_i>0,\ i=1,\cdots,m,$ and $\lambda_i<0,\ i=m+1,\cdots,k $ for $0<m<k $, use of the vanishing boundary condition $U_{i}(-\infty) = 0 $  in (\ref{eq2p11}) for the localized solution impart $d_{i} =  e_{i} = 0,\ \text{for} \ i=1,2, \cdots, m $ and $c_{i} =  e_{i} = 0,\ \text{for} \ i=m+1,m+2, \cdots, k $. Consequently the leading and higher order rectification terms  of the series solution are given by (\ref{eq2p10}) with
\begin{eqnarray}\label{eq2p11}
&{\cal X}_0 \cong \left[\begin{array}{c}U_{1,0}(x) \\ \vdots \\U_{m,0}(x)\\U_{m+1,0}(x) \\ \vdots \\U_{k,0}(x) \\\end{array}\right]= \left[\begin{array}{c} c_{1}\; e^{\lambda_1 x} \\ \vdots \\ c_{m}\; e^{\lambda_m x}\\ d_{m+1}\; e^{-\lambda_{m+1} x}\\ \vdots \\ d_{k}\; e^{-\lambda_k x} \\\end{array}\right].
\end{eqnarray}
\textbf{Case-II} In case all $ \lambda_{i} > 0 $, use of the  condition $U_{i}(-\infty) = 0 $  in (\ref{eq2p11}) for the localized solution confers $d_{i} =  e_{i} = 0,\ i=1,2, \cdots, k $. Hence the leading and higher order amendment terms  of the series solution are given by (\ref{eq2p10}) with
\begin{eqnarray}\label{eq2p14}
&{\cal X}_0 \cong \left[\begin{array}{c}U_{1,0}(x)\\U_{2,0}(x)\\ \vdots \\U_{k,0}(x) \\\end{array}\right]=\left[\begin{array}{c} c_{1}\; e^{\lambda_1 x} \\ c_{2}\; e^{\lambda_2 x}\\ \vdots \\ c_{k}\; e^{\lambda_k x} \\\end{array}\right].
\end{eqnarray}
\textbf{Case-III} Likewise in case when all $ \lambda_{i} < 0 $, for boundary condition $U_{i}(-\infty) = 0 $ to obtain the localized solution  we go along with restraining the term involving $e^{\lambda_i x}$. In this case, the commanding and subsequent  terms  of the solution are given by (\ref{eq2p10}) with
\begin{eqnarray}\label{eq2p15}
&{\cal X}_0 \cong \left[\begin{array}{c}U_{1,0}(x)\\U_{2,0}(x)\\ \vdots \\ U_{k,0}(x) \\\end{array}\right]= \left[\begin{array}{c}  d_{1} \; e^{-\lambda_1 x} \\  d_{2} \; e^{-\lambda_2 x} \\ \vdots \\  d_{k} \; e^{-\lambda_k x} \\\end{array}\right].
\end{eqnarray}
Similarly for boundary condition $U_{i}(\infty) = 0 $, one needs to proceed contrary to the above cases.
Using above presented iteration schemes and symbolic software, one can obtain the general term of the series  (or generating function). Gives the exact solution of the considered system of equations.  
\section{Solution of  coupled KdV Eq. (\ref{KDVe1})}\label{sec4}
Let us consider the solution of the equation (\ref{KDVe1}) by using travelling wave transformation  \cite{inc2018dark,eslami2016exact,chen2018simplest} in the form
\begin{eqnarray}\label{KdVe2}
\begin{cases}
u(x,t)=U(\xi),\\
v(x,t)=V(\xi),
\end{cases}
\xi=\frac{k}{\alpha}\ x^{\alpha}+\frac{c}{\beta}\ t^{\beta}+\xi_0,
\end{eqnarray}
where $c,\ k$ and $\xi_0$ are constants. Using the chain rule of   conformable
derivatives, the transformation (\ref{KdVe2}) permit us to convert
(\ref{KDVe1}) to an ordinary differential equation in the
form :
\begin{eqnarray}\label{KdVe3}
\begin{cases}
U^{'''}(\xi)-\lambda_1^2\ U^{'}(\xi)+\frac{6 }{k^2}U(\xi) U^{'}(\xi)-\frac{2 b }{a k^2}V(\xi) V^{'}(\xi)=0,\\
V^{'''}(\xi)-\lambda_2^2\ V^{'}(\xi)+\frac{3}{k^2} U(\xi)  V^{'}(\xi)=0,
\end{cases}
\end{eqnarray}
where 
\begin{eqnarray}\label{KdVe4}
\lambda_1=\sqrt{-\frac{c}{a k^3}} \ \text{ and} \ \lambda_2=\sqrt{-\frac{c}{ k^3}}.
\end{eqnarray}
It is important to note here that throughout our discussion, we assume that  $\lambda_1$ and $\lambda_2$ are positive real and that leads us to the conditions  $a>0$ and $c k^3<0$. The above presented system of equations can be recast in the form (\ref{eq2p1}) with 
 $${\cal X}=\left[\begin{array}{c}U(\xi)\\V(\xi)  \\\end{array}\right] , \ \ {\cal A}=\left[\begin{array}{c}
 \lambda_1\ \ 0 \\ 0\ \ \lambda_2  \\\end{array}\right], \ \text{and} \ \ {\cal N}=\left[\begin{array}{c}
- \frac{6 }{k^2}U(\xi) U'(\xi)+\frac{2 b }{a k^2}V(\xi) V'(\xi) \\ - \frac{3}{k^2} U(\xi)  V'(\xi) \\\end{array}\right].$$ 
Thereafter, we pursue the imitating steps of RCAM presented in the section \ref{sec3} to construct the solution of (\ref{KdVe3}). To constitute  localized solutions satisfying boundary condition $U(-\infty)=V(-\infty)=0$,    we use (\ref{eq2p10}) with (\ref{eq2p14}).
 That yield the  following correction terms
 \begin{align*}
% \begin{cases}
  U_0(\xi )=&c_1 \ e^{\lambda_1  \xi },\\
  V_0(\xi )=&c_2 \ e^{\lambda_2  \xi },\\
%  \end{eqnarray*}
%   \begin{eqnarray*}
 U_1(\xi )=&- \left[ e^{2 \lambda _2 \xi } \left(a c_1^2 \left(\lambda _1^2-4 \lambda _2^2\right) e^{2 \left(\lambda _1-\lambda _2\right) \xi }+b c_2^2 \lambda _1^2\right)\right] \text{/}\left[ a k^2
   \left(\lambda _1^4-4 \lambda _1^2 \lambda _2^2\right)\right],\\
V_1(\xi )=&-\left[3 c_1 c_2 \lambda _2 e^{\left(\lambda _1+\lambda _2\right) \xi }\right] \text{/}\left[k^2 \lambda _1 \left(\lambda _1+\lambda _2\right)  \left(\lambda _1+2 \lambda _2\right)\right],\\
%  \end{eqnarray*}
%   \begin{eqnarray*}
%   \begin{cases}
  U_2(\xi )=& \left[3 c_1 e^{\lambda_1 \xi } \left(a c_1^2 \lambda _2 \left(\lambda _1+\lambda _2\right)^2 \left(\lambda _1^2-4 \lambda _2^2\right) e^{2 \lambda _1 \xi }  +2 b c_2^2  \left( \lambda_1^2+2 \lambda _2^2\right) \lambda_1^3 e^{2 \lambda_2 \xi } \right) \right] \\
  & \text{/} \left[4 a k^4 \lambda_1^4 \lambda_2  \left( \lambda_1+\lambda_2 \right)^2 \left( \lambda_1^2-4 \lambda_2^2 \right) \right],\\
  V_2(\xi )=&\left[ c_2 e^{\lambda _2 \xi } \left(12 a c_1^2 \left(\lambda _1-2 \lambda _2\right) \lambda _2^3 e^{2 \lambda _1 \xi }  +b c_2^2 \left(\lambda _1+\lambda _2\right) \lambda _1^3
   e^{2 \lambda _2 \xi }\right)\right] \text{/} \left[8 a k^4 \lambda _1^3 \lambda _2^2 \right. \\
 &  \left. \times  \left(\lambda _1+\lambda _2\right)  \left(\lambda _1^2-4 \lambda _2^2\right)\right],\\
%   \end{cases}
%   \end{eqnarray*}
%   \begin{eqnarray*}
  U_3(\xi )=&-\left[2 a^2 c_1^4 \left(\lambda _1-2 \lambda _2\right)^2 \lambda _2^2 \left(\lambda _1+2 \lambda _2\right)^3 e^{4 \lambda _1 \xi }+12 a b  c_1^2 c_2^2 \lambda _2
   \left(\lambda _1^3-2 \lambda _2 \lambda _1^2 \right. \right. \\
   &  \left. +2 \lambda _2^2 \lambda _1-4 \lambda _2^3\right) \lambda _1^3 e^{2 \left(\lambda _1+\lambda _2\right) \xi }+ \left. b^2 c_2^4
   \left(\lambda _1+2 \lambda _2\right) \lambda _1^6 e^{4 \lambda _2 \xi }\right] \\
   &\text{/}\left[4 a^2 k^6 \lambda _1^6 \left(\lambda _1-2 \lambda _2\right)^2  \lambda _2^2 \left(\lambda _1+2 \lambda _2\right)^3\right],\\
  V_3(\xi )=&-\left[3 c_1 c_2 e^{\left(\lambda _1+\lambda _2\right) \xi } \left(2 a c_1^2 \lambda _2^2 \left(\lambda _1+2 \lambda _2\right)^2 \left(\lambda _1^2-\lambda _1 \lambda _2 - 2 \lambda _2^2\right) e^{2 \lambda _1 \xi } \right. \right. \\
  & \left. \left. +3 b c_2^2 \left(\lambda _1^2+\lambda _2 \lambda _1+2 \lambda _2^2\right) \lambda _1^4 e^{2 \lambda _2 \xi }\right)\right]\text{/} \left[8 a k^6 \lambda_1^5 \left(\lambda _1-2 \lambda _2\right) \lambda _2 \left(\lambda _1+\lambda _2\right)^2 \right. \\
  &\times \left.  \left(\lambda _1+2 \lambda _2\right)^3\right],
 \\
 \ \ \ \  \vdots 
%  \end{cases} 
 \end{align*}
where $c_1$ and $c_2$ are integration constants. Likewise, the  higher order terms can be calculated using symbolic computations available in software packages Mathematica and Maple. Afterward calculating up to eight or higher order iterations, the said software packages can easily provide the generating functions of those correction terms. For  this problem, we have obtained the following generating functions
 \begin{align}\label{mgfun1}
 U(\xi, \epsilon )=&4 k^4 \left(\lambda _1-2 \lambda _2\right) \left(\lambda _1+\lambda _2\right)^2 \left(\lambda _1+2 \lambda _2\right)^2 \left[ 64 c_1  a^2 k^8  \lambda_1^4 \lambda_2^4 \left(\lambda_1 -2  \lambda _2\right) \right. \nonumber  \\
 & \left. \times   \left(\lambda _1+\lambda _2\right)^2  \left(\lambda _1+2 \lambda _2\right)^4  e^{\lambda _1 \xi } - 16   \epsilon a b k^2 c_2^2 \lambda_2^2 \left(\lambda _1+2 \lambda _2\right)  \left\{ \epsilon c_1 k^2 \lambda_1^2  \right. \right.  \nonumber \\ 
 & \times  \left. \left. \left(\lambda_1^2-4 \lambda_2^2\right)^2  \left(\lambda_1^2+\lambda _2^2\right)   e^{\left(\lambda _1+2 \lambda _2\right) \xi } +\epsilon^2 c_1^2 \lambda _2^2 \left(\lambda _1^2-3 \lambda _1 \lambda _2  + 2   \lambda _2^2\right)^2  \right. \right.  \nonumber \\ 
 & \times  \left. \left.  e^{2 \left(\lambda_1+\lambda _2\right) \xi } +4 k^4 \lambda_1^4 \lambda_2^2 \left(\lambda _1^2+3 \lambda _2 \lambda _1+2 \lambda _2^2\right)^2  e^{2 \lambda _2 \xi }\right\}+ \epsilon^4 b^2  c_1 c_2^4 \lambda_1^4  \right. \nonumber \\
&\times \left. \left(\lambda _1-2 \lambda _2\right) \left(\lambda _1-\lambda _2\right)^2  e^{\left(\lambda _1+4 \lambda_2\right) \xi }\right]\text{/} Q(\xi, \epsilon)^2, 
\end{align}
\begin{align}\label{mgfun2}
V(\xi , \epsilon)=& 8 a c_2 k^4  \lambda_2^2 \left(\lambda _1-2 \lambda _2\right) \left(\lambda _1+\lambda _2\right) \left(\lambda _1+2 \lambda _2\right)^2   e^{\lambda _2 \xi } \left[  \epsilon  c_1 \left(\lambda _1^2-3 \lambda _2 \lambda _1 \right. \right. \nonumber \\
& \left. \left. +2 \lambda_2^2\right) e^{\lambda _1 \xi } +2 k^2 \lambda _1^2  \left(\lambda _1^2+3 \lambda _2 \lambda _1+2 \lambda _2^2\right) \right] \text{/} Q(\xi, \epsilon), 
\end{align}
   where
\begin{align}\label{mgfun3}
  Q(\xi , \epsilon)=& 8 a k^4 \lambda_2^2 \left(\lambda _1-2 \lambda _2\right)  \left(\lambda _1+\lambda _2\right)^2 \left(\lambda_1+2 \lambda_2\right)^3  \left( \epsilon c_1   e^{\lambda_1 \xi }+2 k^2 \lambda_1^2 \right)-b \epsilon^2 c_2^2 e^{2 \lambda _2 \xi }    \nonumber \\
  & \times \left\{  \epsilon c_1 \left(\lambda_1^2 -3 \lambda_2 \lambda_1 + 2 \lambda_2^2\right)^2  e^{\lambda_1 \xi }+2 k^2 \lambda_1^2 \left(\lambda_1^2 +3 \lambda_2 \lambda_1 +2 \lambda_2^2 \right)^2 \right\}.   
 \end{align}
 One can easily check that series expansion of generating functions $U(\xi, \epsilon), V(\xi, \epsilon)$ in $\epsilon$ about $0$ provides the correction terms as coefficients of several powers of  $\epsilon$. Consequently, the exact solution of  (\ref{KdVe3}) can be obtained by setting $\epsilon=1$ in  (\ref{mgfun1})-(\ref{mgfun3}) in the form
 \begin{align}\label{mgfun4}
 U(\xi )=&4 k^4 \left(\lambda _1-2 \lambda _2\right) \left(\lambda _1+\lambda _2\right)^2 \left(\lambda _1+2 \lambda _2\right)^2 \left[ 64 c_1  a^2 k^8  \lambda_1^4 \lambda_2^4 \left(\lambda_1 -2  \lambda _2\right) \right. \nonumber  \\
 & \left. \times   \left(\lambda _1+\lambda _2\right)^2  \left(\lambda _1+2 \lambda _2\right)^4  e^{\lambda _1 \xi } - 16   a b k^2 c_2^2 \lambda_2^2 \left(\lambda _1+2 \lambda _2\right)  \left\{  c_1 k^2 \lambda_1^2  \right. \right.  \nonumber \\ 
 & \times  \left. \left. \left(\lambda_1^2-4 \lambda_2^2\right)^2  \left(\lambda_1^2+\lambda _2^2\right)   e^{\left(\lambda _1+2 \lambda _2\right) \xi } + c_1^2 \lambda _2^2 \left(\lambda _1^2-3 \lambda _1 \lambda _2  + 2   \lambda _2^2\right)^2  \right. \right.  \nonumber \\ 
 & \times  \left. \left.  e^{2 \left(\lambda_1+\lambda _2\right) \xi } +4 k^4 \lambda_1^4 \lambda_2^2 \left(\lambda _1^2+3 \lambda _2 \lambda _1+2 \lambda _2^2\right)^2  e^{2 \lambda _2 \xi }\right\}+  b^2  c_1 c_2^4 \lambda_1^4  \right. \nonumber \\
&\times \left. \left(\lambda _1-2 \lambda _2\right) \left(\lambda _1-\lambda _2\right)^2  e^{\left(\lambda _1+4 \lambda_2\right) \xi }\right]\text{/} Q(\xi)^2,  
\end{align}
\begin{align}\label{mgfun5}
V(\xi)=& 8 a c_2 k^4  \lambda_2^2 \left(\lambda _1-2 \lambda _2\right) \left(\lambda _1+\lambda _2\right) \left(\lambda _1+2 \lambda _2\right)^2   e^{\lambda _2 \xi } \left[   c_1 \left(\lambda _1^2-3 \lambda _2 \lambda _1 \right. \right. \nonumber \\
& \left. \left. +2 \lambda_2^2\right) e^{\lambda _1 \xi } +2 k^2 \lambda _1^2  \left(\lambda _1^2+3 \lambda _2 \lambda _1+2 \lambda _2^2\right) \right] \text{/} Q(\xi),
\end{align}
   where
\begin{align}\label{mgfun6}
  Q(\xi )=& 8 a k^4 \lambda_2^2 \left(\lambda _1-2 \lambda _2\right)  \left(\lambda _1+\lambda _2\right)^2 \left(\lambda_1+2 \lambda_2\right)^3  \left(  c_1   e^{\lambda_1 \xi }+2 k^2 \lambda_1^2 \right)-b  c_2^2 e^{2 \lambda _2 \xi }    \nonumber \\
  & \times \left\{   c_1 \left(\lambda_1^2 -3 \lambda_2 \lambda_1 + 2 \lambda_2^2\right)^2  e^{\lambda_1 \xi }+2 k^2 \lambda_1^2 \left(\lambda_1^2 +3 \lambda_2 \lambda_1 +2 \lambda_2^2 \right)^2 \right\}.      
 \end{align} 
 Equations  (\ref{mgfun4})-(\ref{mgfun6}) with (\ref{KdVe2}) and (\ref{KdVe4}) constitutes the solution of (\ref{KDVe1}). These solutions have been verified through symbolic computation.
%%%%%%%%%%%%%%%%%%%%%%%%%%%%%%%
\section{Boundedness of solution}\label{sec5}
 Bounded solutions have exigent bags of contribution in recounting  perspective of  physical systems  modelled by differential equations. This section is devoted to deriving the boundedness conditions of solution (\ref{mgfun4})-(\ref{mgfun6}) with (\ref{KdVe2}) and (\ref{KdVe4})  of (\ref{KDVe1}). To attain the goal we propose the following theorems:
% \textbf{Theorem 1.}
\begin{thm}\label{th1} \ The solution (\ref{mgfun4})-(\ref{mgfun6}) of  equation (\ref{KdVe3}) is bounded in $\mathbb{R}$ if the parameters $a, \ b, \ k,\ \lambda _1,\  \lambda _2$ involved in the equation and arbitrary constants $c_1, \ c_2$ present in the solution satisfy any one of the following conditions. \\
In case of \textbf{ $0< \frac{1}{2} \lambda _1<\lambda _2< \lambda _1$:}
\begin{description}
\item{I.(a)}\ \ $c_1>0,\ c_2>0,\ a>0,\  b>0,\  k<0,$
\item{I.(b)}\ \ $c_1>0,\ c_2>0,\ a>0,\  b>0,\  k>0,$
\item{I.(c)}\ \ $c_1>0,\ c_2<0,\ a>0,\  b>0,\  k<0,$
\item{I.(d)}\ \ $c_1>0,\ c_2<0,\ a>0,\  b>0,\  k>0.$
\end{description}
 In case of \textbf{ $0<\lambda _1< \lambda _2$ :}
\begin{description}
\item{II.(a)}\ \ $c_1>0,\ c_2>0,\ a>0,\  b>0,\  k<0,$
\item{II.(b)}\ \ $c_1>0,\ c_2>0,\ a>0,\  b>0,\  k>0,$
\item{II.(c)}\ \ $c_1>0,\ c_2<0,\ a>0,\  b>0,\  k<0,$
\item{II.(d)}\ \ $c_1>0,\ c_2<0,\ a>0,\  b>0,\  k>0.$
\end{description}
 In case of \textbf{ $0<\lambda _2<\frac{1}{2}\lambda _1$ :}
\begin{description}
\item{III.(a)}\ \ $c_1>0,\ c_2>0,\ a>0,\  b<0,\  k<0,$
\item{III.(b)}\ \ $c_1>0,\ c_2>0,\ a>0,\  b<0,\  k>0,$
\item{III.(c)}\ \ $c_1>0,\ c_2<0,\ a>0,\  b<0,\  k<0,$
\item{III.(d)}\ \ $c_1>0,\ c_2<0,\ a>0,\  b<0,\  k>0.$
\end{description}
\end{thm}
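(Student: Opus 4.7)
The plan is to reduce the claim to showing that the denominator $Q(\xi)$ in (\ref{mgfun6}) is nowhere zero on $\mathbb{R}$; once $Q\neq 0$, boundedness of $U$ and $V$ at any finite $\xi$ is immediate, and boundedness at $\xi\to\pm\infty$ can be read off the exponential growth rates. Indeed, every exponential appearing in the numerators of (\ref{mgfun4})--(\ref{mgfun5}) has the form $e^{\mu\xi}$ with $\mu>0$, so they all vanish as $\xi\to-\infty$, while $Q(\xi)\to 16ak^6\lambda_1^2\lambda_2^2(\lambda_1-2\lambda_2)(\lambda_1+\lambda_2)^2(\lambda_1+2\lambda_2)^3\neq 0$; as $\xi\to+\infty$, $Q$ grows like $e^{(\lambda_1+2\lambda_2)\xi}$ and a direct comparison of exponents shows that $V/Q$ and $U/Q^2$ decay exponentially in each of the three regimes. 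Hence only the interior nonvanishing of $Q$ requires work.

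For that I would first use the factorizations
\[
\lambda_1^2-3\lambda_1\lambda_2+2\lambda_2^2=(\lambda_1-\lambda_2)(\lambda_1-2\lambda_2),\qquad \lambda_1^2+3\lambda_1\lambda_2+2\lambda_2^2=(\lambda_1+\lambda_2)(\lambda_1+2\lambda_2)
\]
to recast $Q$ as a four-term exponential sum $Q(\xi)=T_0+T_1 e^{\lambda_1\xi}+T_2 e^{2\lambda_2\xi}+T_3 e^{(\lambda_1+2\lambda_2)\xi}$, with
\begin{align*}
T_0&=16ak^6\lambda_1^2\lambda_2^2(\lambda_1-2\lambda_2)(\lambda_1+\lambda_2)^2(\lambda_1+2\lambda_2)^3,\\
T_1&=8ac_1k^4\lambda_2^2(\lambda_1-2\lambda_2)(\lambda_1+\lambda_2)^2(\lambda_1+2\lambda_2)^3,\\
T_2&=-2bc_2^2k^2\lambda_1^2(\lambda_1+\lambda_2)^2(\lambda_1+2\lambda_2)^2,\\
T_3&=-bc_1c_2^2(\lambda_1-\lambda_2)^2(\lambda_1-2\lambda_2)^2.
\end{align*}
It then suffices to show that in each listed sub-case all four coefficients $T_0,T_1,T_2,T_3$ share the same strict sign, since a sum of strictly monosign exponentials never vanishes on $\mathbb{R}$.

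The sign analysis is uniform. All of $k^{2n}$, $\lambda_j^2$, $c_2^2$, $(\lambda_1\pm\lambda_2)^2$ and $(\lambda_1\pm 2\lambda_2)^2$ are strictly positive, and every sub-case assumes $a>0$ and $c_1>0$; this by itself explains why the labels (a)--(d) within each Roman numeral produce the same outcome, because only the magnitudes of $k$ and $c_2$ enter the $T_j$. The key observation is that every sub-case forces $b\,(\lambda_1-2\lambda_2)<0$: Cases I and II impose $b>0$ together with $\lambda_1<2\lambda_2$, while Case III imposes $b<0$ together with $\lambda_1>2\lambda_2$. Consequently $T_0,T_1$ carry the sign of $(\lambda_1-2\lambda_2)$, $T_2,T_3$ carry the sign of $-b$, and by the preceding identity these two signs coincide. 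Hence all four $T_j$ are negative throughout Cases I and II, and all four are positive throughout Case III, so $Q(\xi)$ is bounded away from zero on $\mathbb{R}$ as required.

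The main obstacle is cosmetic rather than conceptual: extracting the factored form of $Q$ and correctly tracking the one surviving odd power $(\lambda_1+2\lambda_2)^3$ demands some care, but once the $T_j$ are assembled the twelve stated sub-cases collapse to the single sign identity $b(\lambda_1-2\lambda_2)<0$ that is built into the hypotheses, and no further ideas are needed.
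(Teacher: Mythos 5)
Your proposal is correct and follows essentially the same route as the paper: both arguments reduce boundedness to the sign-definiteness of $Q(\xi)$, established by observing that under each hypothesis the constituent terms of $Q$ all share one strict sign because $a>0$, $c_1>0$, and $b(\lambda_1-2\lambda_2)<0$ (the paper groups $Q$ into two positive brackets with signed prefactors rather than expanding into your four monomials $T_0,\dots,T_3$, but this is a cosmetic difference). Your additional check of the asymptotics as $\xi\to\pm\infty$ is a welcome bit of rigor that the paper leaves implicit.
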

%\textbf{Proof.} \
\begin{proof}
 Solution (\ref{mgfun4})-(\ref{mgfun6}) have a common factor in denominators given by $Q(\xi).$
Boundedness of this solution demands that $Q(\xi)$ never vanishes (ie., $Q(\xi)>0 $ \text{or} $Q(\xi)<0 $). For $c_1>0$, we have
$ \left( c_1   e^{\lambda_1 \xi }+2 k^2 \lambda_1^2\right) >0 $  and
 $$\left\{  c_1 \left(\lambda_1^2 -3 \lambda_2 \lambda_1 +2 \lambda_2^2\right)^2   e^{\lambda_1 \xi } +2 k^2 \lambda_1^2 \left(\lambda_1^2 +3 \lambda_2 \lambda_1 +2 \lambda_2^2 \right)^2 \right\}>0.$$
Above presented conditions with $a>0$ lead us to the conditions for non-vanishing  $Q(\xi)$ as: 
\begin{description} 
\item{Case  $Q(\xi)< 0$:} $ \  \left\{ \left(\lambda _1-2 \lambda _2\right)<0,\  b>0 \right\} \Rightarrow  \ \left\{  \frac{1}{2} \lambda_1<\lambda_2<\lambda_1,\ b>0 \right\}$ \ or \ $ \left\{ 0< \lambda_1<\lambda_2,\ b>0 \right\}. $\\
\item{Case  $Q(\xi)> 0$:} $ \  \left\{ \left(\lambda _1-2 \lambda _2\right)>0, \ b<0 \right\} \Rightarrow  \  \left\{   0<\lambda_2< \frac{1}{2} \lambda_1,\ b<0 \right\}. $
\end{description}
Case $Q(\xi)< 0$ provide the conditions I.(a)$-$I.(d) and II.(a)$-$II.(d) whereas  Case $Q(\xi)> 0$ give conditions III.(a)$-$III.(d) of the theorem. That completes the proof of the theorem.
\end{proof}
  %%%%%%%%%%%%%%%%%%%%%%%%%%%%%%%%%%%%%%%%%%%%%%%%%
To check the correctness of the conditions presented  in the above theorem, few particular values of parameters satisfying above conditions have been presented in the table \ref{tab1}. These values have been utilized in figure \ref{fig1} to plot the solution (\ref{mgfun4})-(\ref{mgfun6}). It is clear from the figures that the solution $U(\xi)$ (given in (\ref{mgfun4})) have three-pick soliton like shape for the conditions I.(a)$-$I.(d) and II.(a)$-$II.(d) respectively and for remaining cases, it bears two-pick soliton form.  However $V(\xi)$ (given in (\ref{mgfun5})) always restrain two-pick soliton form. Besides that, it is important to note here that  these solutions may bear a lower number of picks for different parameters values satisfying above conditions. 
%%%%%%%%%%%%%%%%%%%%%%%%%%%%%%
\begin{table}
\begin{center}
\caption{ Values of parameters present in $U(\xi),\ V(\xi)$ satisfying conditions presented in Theorem \ref{th1}, used  in Figure \ref{fig1}.}\label{tab1}
\resizebox{\linewidth}{!}{ 
\begin{tabular}{l l l l l l l l c}
\hline
Case &$\lambda_1$ & $\lambda_2$ & $a$ &$b$ &$k$&$c_1$&$c_2$&Figure\\[7pt] \hline \hline
&$ 0.4 $& $ 0.399 $& $1.6 $ & $0.1 $ &$ -0.25 $&$ 1.9 $&$0.5 $&I.(a)\\ %\cline{2-9}
$\frac{1}{2} \lambda_1<\lambda_2<\lambda_1$&$ 1.43 $& $ 1.42 $& $1 $ & $3.1 $ &$ 0.5 $&$ 4.9 $&$0.6 $&I.(b)\\ %\cline{2-9}
$a>0, \ b>0$&$ 0.4 $& $ 0.399 $& $1.6 $ & $0.1 $ &$ -0.25 $&$ 1.9 $&$-0.5 $&I.(c)\\ %\cline{2-9}
&$ 1.43 $& $ 1.42 $& $1 $ & $3.1 $ &$ 0.5 $&$ 4.9 $&$-0.6 $&I.(d)\\ \hline \hline 

&$ 0.38888 $& $ 0.3889 $& $2.1 $ & $0.001 $ &$ -0.02 $&$ 8.9 $&$9.4 $&II.(a)\\ % \cline{2-9}
$0< \lambda_1<\lambda_2$&$ 1.08 $& $ 1.0809 $& $0.005 $ & $0.0055 $ &$ 0.2 $&$ 0.039 $&$0.003 $&II.(b)\\ % \cline{2-9}
$a>0, \ b>0$&$ 0.38888 $& $ 0.3889 $& $2.1 $ & $0.001 $ &$ -0.02 $&$ 8.9 $&$-9.4 $&II.(c)\\ %\cline{2-9}
&$ 1.08 $& $ 1.0809 $& $0.005 $ & $0.0055 $ &$ 0.2 $&$ 0.039 $&$-0.003 $&II.(d)\\ \hline \hline 

&$ 1.11 $& $ 0.42 $& $1.2 $ & $-3.1 $ &$ -1.1 $&$ 2.3 $&$0.3 $&III.(a)\\ %\cline{2-9}
$0<\lambda_2< \frac{1}{2} \lambda_1$&$ 2 $& $ 0.99 $& $1.2 $ & $-3.1 $ &$ 1.1 $&$ 1.3 $&$0.3 $&III.(b)\\ % \cline{2-9}
$a>0, \ b<0$&$ 1.11 $& $ 0.42 $& $1.2 $ & $-3.1 $ &$ -1.1 $&$ 2.3 $&$-0.3 $&III.(c)\\ %\cline{2-9}
&$ 2 $& $ 0.99 $& $1.2 $ & $-2.1 $ &$ 1.1 $&$ 5.3 $&$-0.6 $&III.(d)\\ \hline \hline
\end{tabular}
}
\end{center}
\end{table}
%%%%%%%%%%%%%%%%%%%%%%%%%%%%%%
%\newpage
\begin{figure*}
\centering
\includegraphics[width=0.24\textwidth]{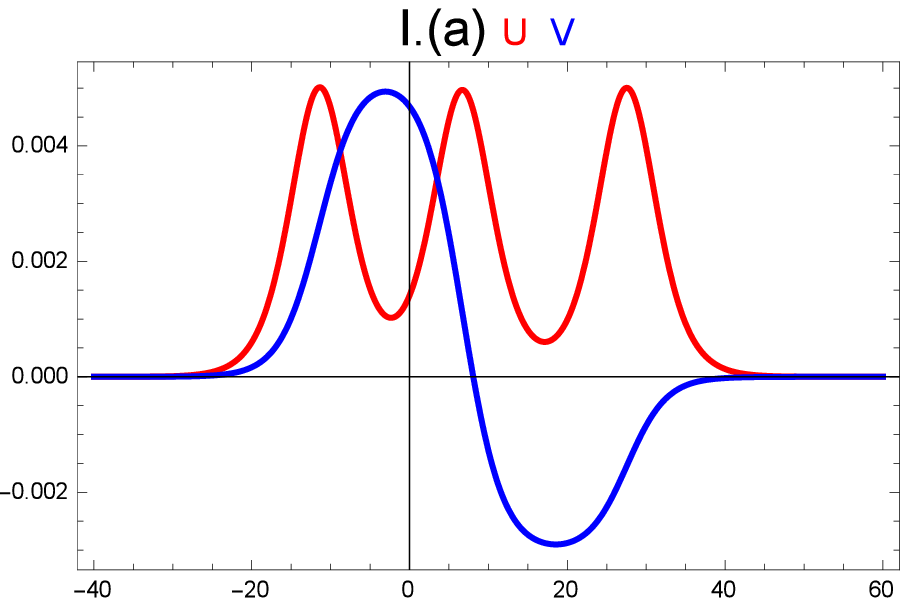}
\includegraphics[width=0.24\textwidth]{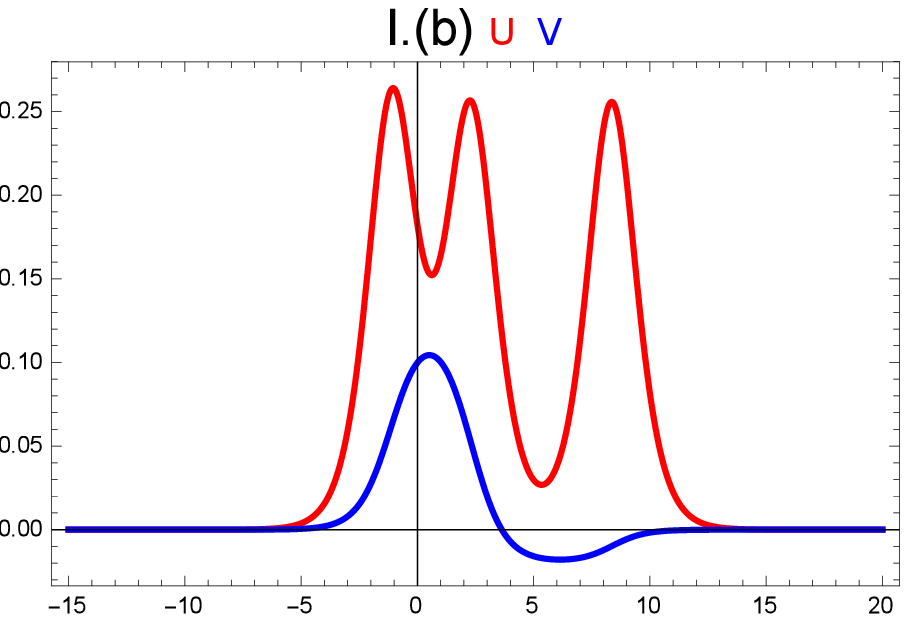}
\includegraphics[width=0.24\textwidth]{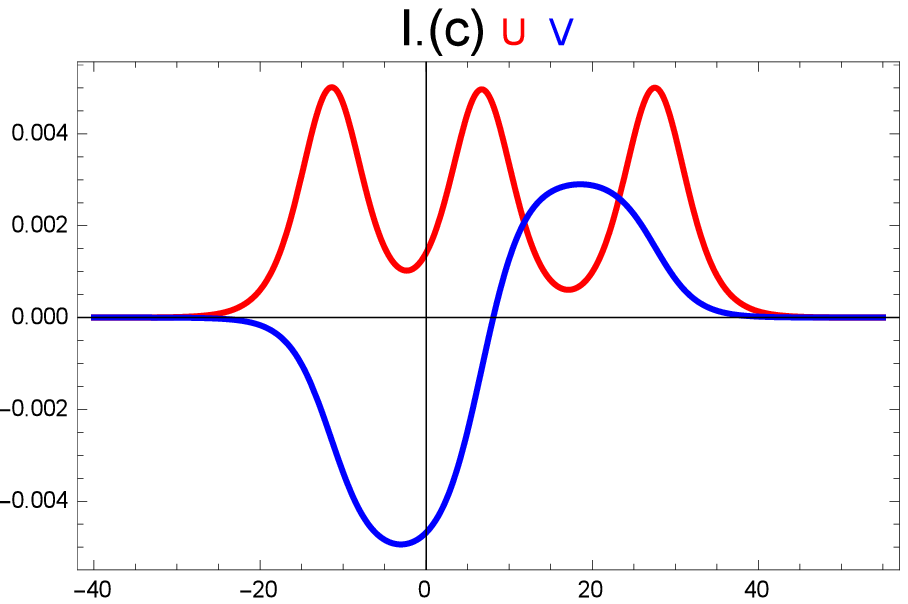}
\includegraphics[width=0.24\textwidth]{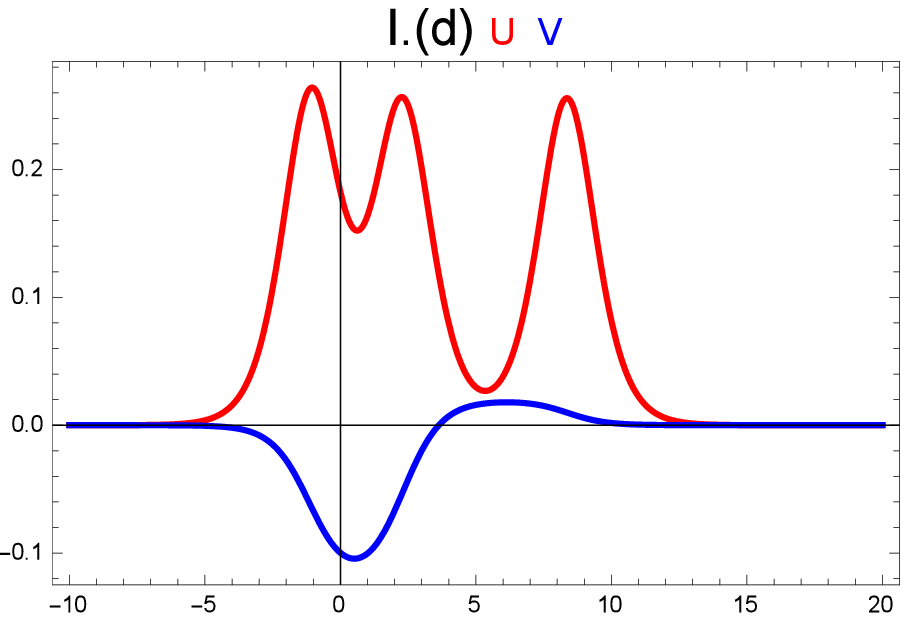}
 \vskip 0.1in 
\includegraphics[width=0.24\textwidth]{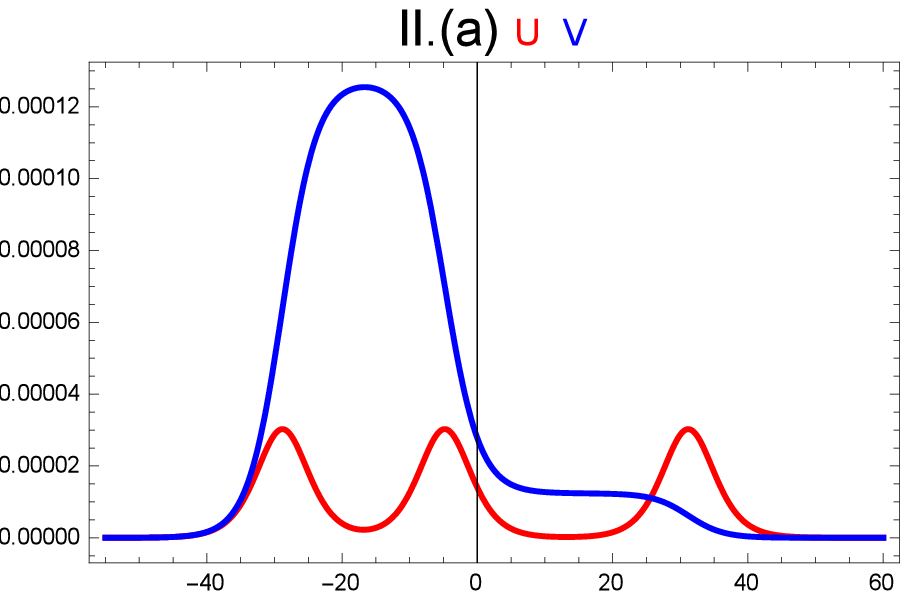}
\includegraphics[width=0.24\textwidth]{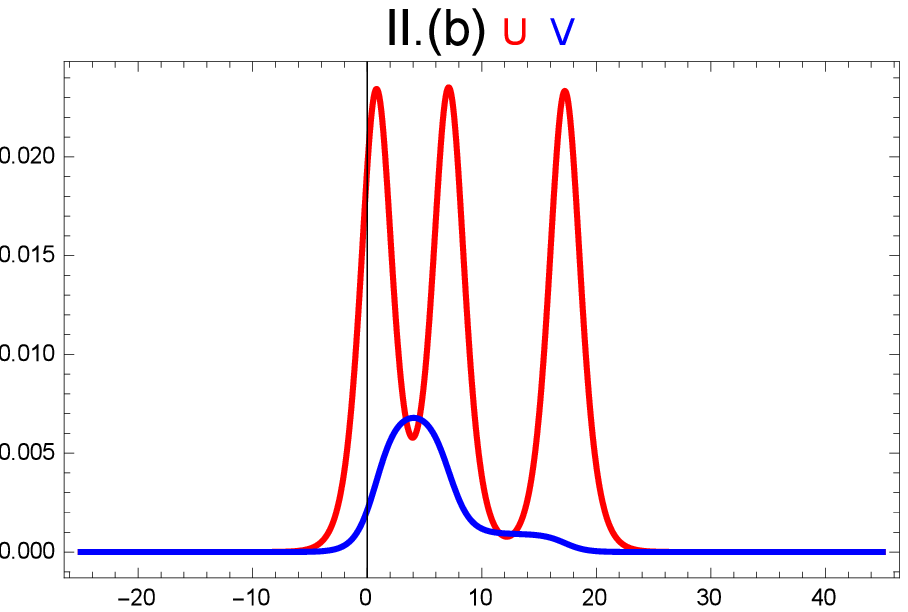}
\includegraphics[width=0.24\textwidth]{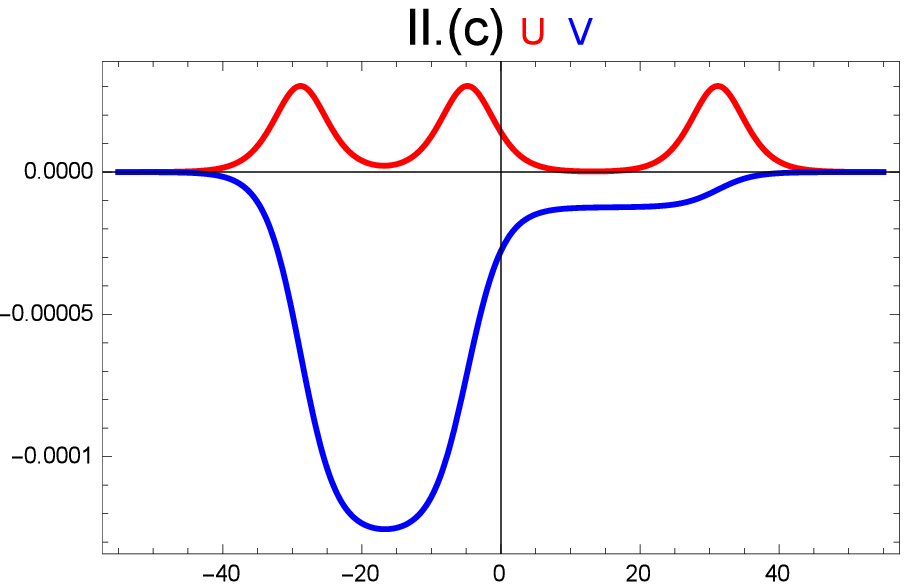}
\includegraphics[width=0.24\textwidth]{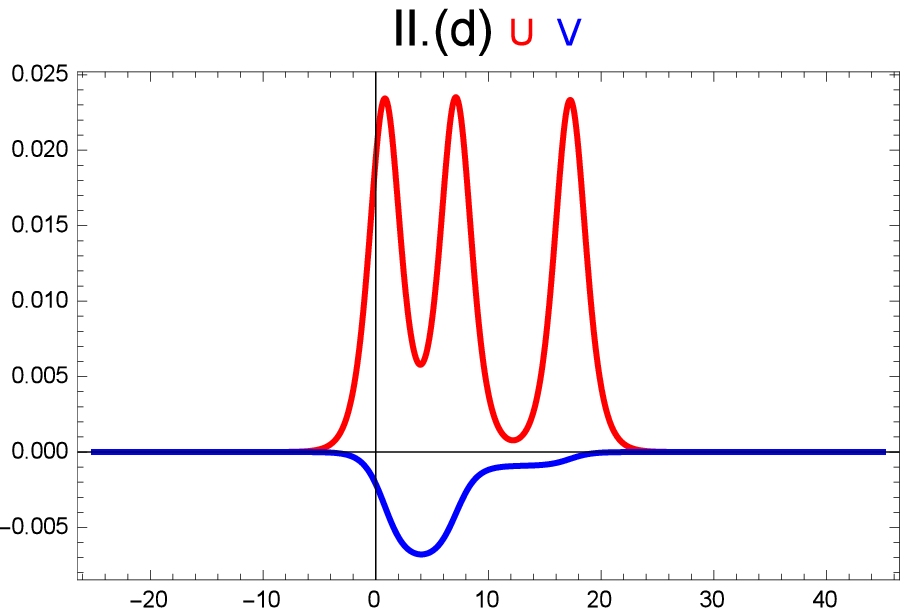}
 \vskip 0.1in
\includegraphics[width=0.24\textwidth]{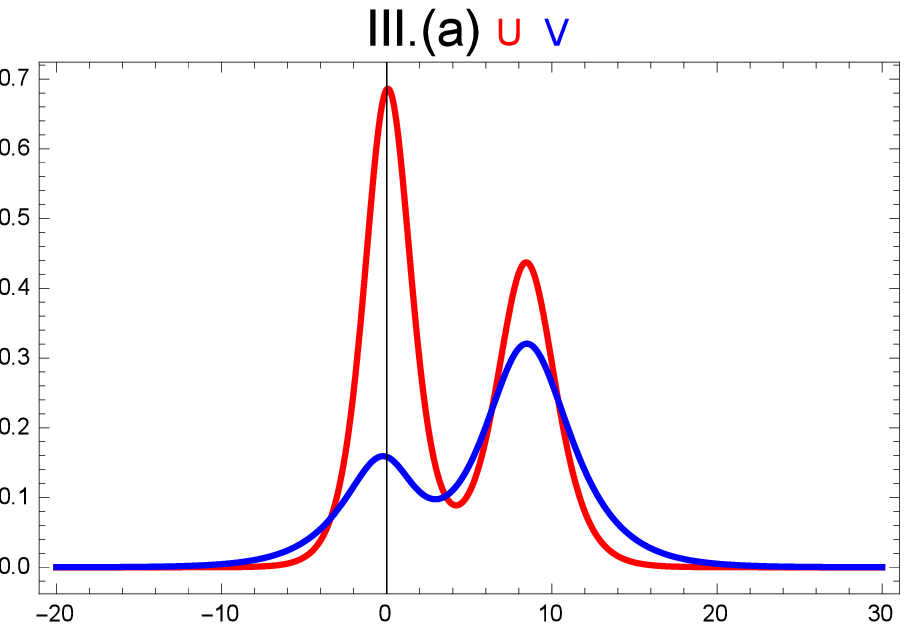}
\includegraphics[width=0.24\textwidth]{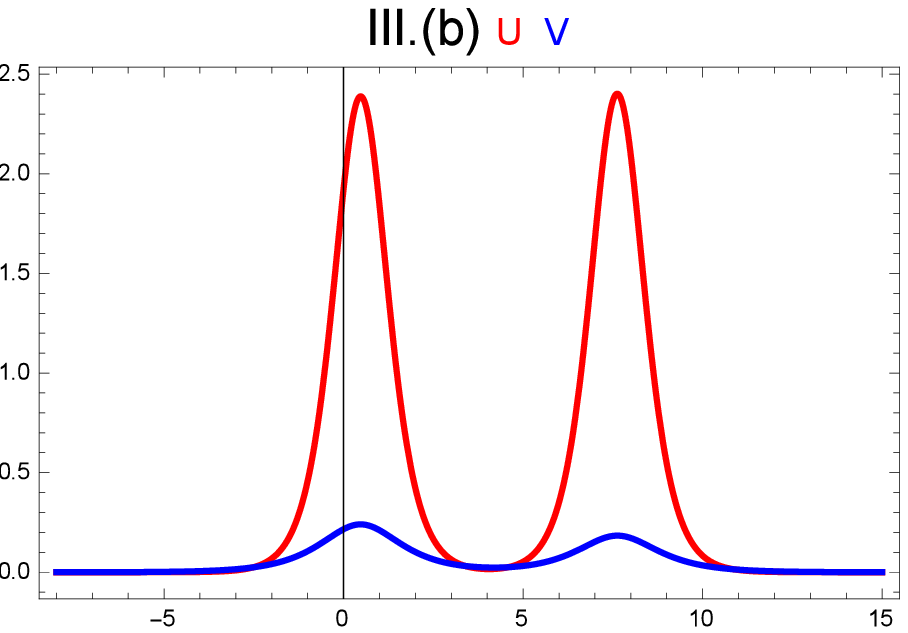}
\includegraphics[width=0.24\textwidth]{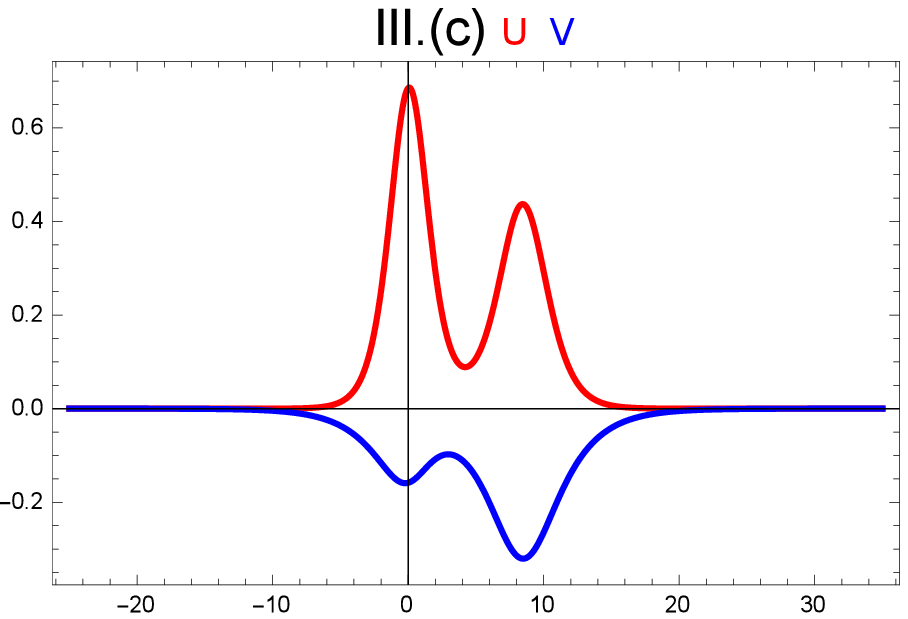}
\includegraphics[width=0.24\textwidth]{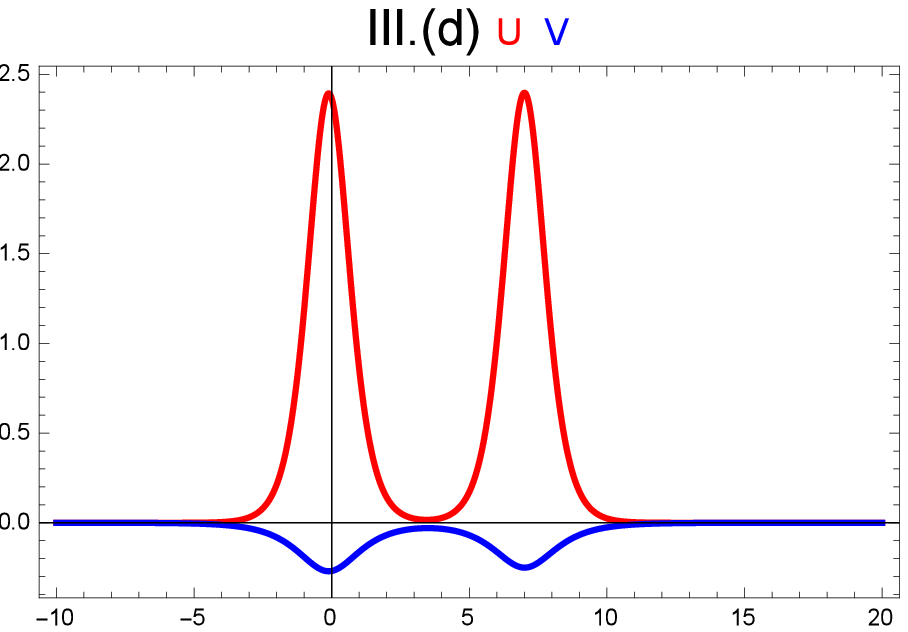}
\vskip 0.05in 
\caption{Plots of solution (\ref{mgfun4})-(\ref{mgfun5}) for the conditions presented in Theorem \ref{th1}, using values of parameters given in
   Table \ref{tab1}.}\label{fig1}       % Give a unique label
\end{figure*}
%%%%%%%%%%%%%%%%%%%%%%%%%%%%%
%\textbf{Theorem 2.}
\begin{thm}\label{th2} \ The solution (\ref{mgfun4})-(\ref{mgfun6}) with (\ref{KdVe2}) and (\ref{KdVe4})  of  equation (\ref{KDVe1}) is bounded in $\mathbb{R} \times \mathbb{R}$ if the parameters $a, \ b, \ c,\ k$ involved in the equation and arbitrary constants $c_1, \ c_2$ present in the solution satisfy any one of the following conditions. \\
In case of \textbf{ $\frac{1}{2}\sqrt{-\frac{c}{a k^3}}<\sqrt{-\frac{c}{ k^3}}< \sqrt{-\frac{c}{a k^3}}$:}
\begin{description}
\item{I.(a)}\ \ $c_1>0,\ c_2>0,\ a>0,\  b>0,\  k<0,\  c>0,$
\item{I.(b)}\ \ $c_1>0,\ c_2>0,\ a>0,\  b>0,\  k>0,\ c<0,$
\item{I.(c)}\ \ $c_1>0,\ c_2<0,\ a>0,\  b>0,\  k<0,\  c>0,$
\item{I.(d)}\ \ $c_1>0,\ c_2<0,\ a>0,\  b>0,\  k>0,\  c<0.$
\end{description}
 In case of \textbf{ $0<\sqrt{-\frac{c}{a k^3}}< \sqrt{-\frac{c}{ k^3}}$ :}
\begin{description}
\item{II.(a)}\ \ $c_1>0,\ c_2>0,\ a>0,\  b>0,\  k<0,\  c>0,$
\item{II.(b)}\ \ $c_1>0,\ c_2>0,\ a>0,\  b>0,\  k>0,\  c<0,$
\item{II.(c)}\ \ $c_1>0,\ c_2<0,\ a>0,\  b>0,\  k<0,\  c>0,$
\item{II.(d)}\ \ $c_1>0,\ c_2<0,\ a>0,\  b>0,\  k>0,\  c<0.$
\end{description}
 In case of \textbf{ $0<\sqrt{-\frac{c}{ k^3}}<\frac{1}{2}\sqrt{-\frac{c}{a k^3}}$ :}
\begin{description}
\item{III.(a)}\ \ $c_1>0,\ c_2>0,\ a>0,\  b<0,\  k<0,\  c>0,$
\item{III.(b)}\ \ $c_1>0,\ c_2>0,\ a>0,\  b<0,\  k>0,\  c<0,$
\item{III.(c)}\ \ $c_1>0,\ c_2<0,\ a>0,\  b<0,\  k<0,\  c>0,$
\item{III.(d)}\ \ $c_1>0,\ c_2<0,\ a>0,\  b<0,\  k>0,\ c<0.$
\end{description}
\end{thm}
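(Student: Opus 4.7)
The plan is to deduce Theorem~\ref{th2} directly from Theorem~\ref{th1} by means of the traveling wave transformation (\ref{KdVe2}) together with the substitutions (\ref{KdVe4}). No fresh analysis of the closed-form expressions is required; the work is essentially one of bookkeeping, pulling the conditions on $\lambda_1,\lambda_2$ back to conditions on $c,k$.

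First I would observe that, under (\ref{KdVe2}), $u(x,t)=U(\xi)$ and $v(x,t)=V(\xi)$ where $\xi$ is a real-valued function of $(x,t)$. Consequently, the supremum of $|u|$ (resp.\ $|v|$) over $\mathbb{R}\times\mathbb{R}$ is bounded above by the supremum of $|U|$ (resp.\ $|V|$) over $\mathbb{R}$. So once I verify that the hypotheses listed in Theorem~\ref{th2} imply those of Theorem~\ref{th1}, boundedness of $U$ and $V$ on $\mathbb{R}$ is immediately transferred to boundedness of $u$ and $v$ on $\mathbb{R}\times\mathbb{R}$.

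Second, I would carry out the translation of hypotheses. Given $a>0$ (which is retained), the requirement $\lambda_1=\sqrt{-c/(ak^3)}>0$ and $\lambda_2=\sqrt{-c/k^3}>0$ (needed to invoke Theorem~\ref{th1}) reduces to the single sign constraint $ck^3<0$, which is realized in exactly two ways: $k>0,\ c<0$ or $k<0,\ c>0$. These two possibilities account for the explicit sign pairings on $(c,k)$ appearing in subcases (a), (c) versus (b), (d) of Theorem~\ref{th2}. Meanwhile, the ordering inequalities $\tfrac{1}{2}\lambda_1<\lambda_2<\lambda_1$, $\lambda_1<\lambda_2$, and $\lambda_2<\tfrac{1}{2}\lambda_1$ from Theorem~\ref{th1} become, after substituting (\ref{KdVe4}), precisely the three regimes I, II, III of the present theorem, while the sign conditions on $a,b,c_1,c_2$ transfer verbatim.

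With these identifications, each of the twelve subcases of Theorem~\ref{th2} is seen to correspond bijectively to exactly one subcase of Theorem~\ref{th1}, so applying Theorem~\ref{th1} in each case finishes the argument. The hard part will be purely combinatorial: making sure the twelve-subcase pairing is correct and that the square roots in (\ref{KdVe4}) do not introduce a spurious sign ambiguity. Because $\operatorname{sgn}(k^3)=\operatorname{sgn}(k)$ and $a>0$, the sign of $c$ is forced once the sign of $k$ is chosen, so no ambiguity in fact survives and the reduction goes through cleanly.
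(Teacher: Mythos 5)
Your proposal is correct and follows essentially the same route as the paper, whose proof of this theorem is literally the one-line remark that it follows from equations (\ref{KdVe2})--(\ref{KdVe4}) and Theorem \ref{th1}; you have simply spelled out the bookkeeping (boundedness of $U,V$ on $\mathbb{R}$ transfers to $u,v$ since $\xi$ is real-valued, and the positivity of $\lambda_1,\lambda_2$ with $a>0$ forces $ck^3<0$, which splits into the sign pairings $(k<0,c>0)$ and $(k>0,c<0)$ appearing in the subcases). No discrepancy with the paper's argument.
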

\begin{proof}
This theorem can be proved in a straightforward way by using equations (\ref{KdVe2})-(\ref{KdVe4}) and  theorem \ref{th1}.
\end{proof}
%%%%%%%%%%%%%%%%%%%%%%%%%%%%%%%%
To verify the conditions presented  in theorem \ref{th2}, some particular values of parameters satisfying the above conditions have been exhibited in table \ref{tab2}. Utilizing those values, 3D plots of the solution  (\ref{mgfun4})-(\ref{mgfun6}) with (\ref{KdVe2}) and (\ref{KdVe4}) for different cases have displayed in figure \ref{fig2}. Figures reveal that the solution $u(x,t)$ (given by (\ref{mgfun4})-(\ref{mgfun6}) with (\ref{KdVe2}) and (\ref{KdVe4})) always have three-hump soliton like  features for the conditions I.(a)$-$I.(d) and II.(a)$-$II.(d) respectively whereas it bears two-hump soliton shape for remaining cases of theorem \ref{th2}.  However $v(x,t)$ (given by (\ref{mgfun5})-(\ref{mgfun6}) with (\ref{KdVe2}) and (\ref{KdVe4})) always restrain two-pick form. Besides that, it is important to mention that  these solutions may bear a lower number of humps. From the 3D plots it is clear that solution can describe various multi-pick soliton states for diverse  values of the corresponding free parameters.

The existing multi-pulse solutions in the literature \cite{chugunova2007two,groves1998solitary}, are constituted with multiple copies of the one-pulse solutions separated by finitely many oscillations close to the zero equilibrium (has oscillatory decaying tails at infinity). It is worth noting that here obtained  multi-pulse solutions ((\ref{mgfun4})-(\ref{mgfun6})) of equation (\ref{KDVe1})  constitute multiple picks but do not oscillate close to the zero equilibrium. This is due to the fact that auxiliary equations of the linear parts of the considered coupled nonlinear differential equations (\ref{KdVe3}) have distinct real roots $(\lambda_1 \neq \lambda_2)$. That ensures  the nonexistence of any periodic functions in the derived solution. So in view of study \cite{groves1998solitary} and above property lead us to conclude that derived solutions (Eq. (\ref{mgfun4})-(\ref{mgfun6})) are  homoclinic  and always have finite numbers of humps and their tails decay to zero exponentially in a monotonic fashion, as depicted in figures \ref{fig1} and  \ref{fig2}. This fact, differentiate the derived solutions (exact) of this paper with the existing multi-hump stationary wave solutions (numerical) in literature  \cite{chugunova2007two,groves1998solitary}.  Also, it is important to mention here that usually soliton trains arrive straight in shape but in  figure \ref{fig2} they are appeared in bent shape due to the existence of  conformable derivative parameter  $\alpha$ and $\beta$ in the solution. 
%%%%%%%%%%%%%%%%%%%%%%%%%%%%%%
\begin{table*}
\caption{ Values of parameters parameters in $u(x,t),\ v(x,t)$ satisfying conditions presented in Theorem \ref{th2}, used  in Figure \ref{fig2}.}\label{tab2}
\begin{center}
\resizebox{\linewidth}{!}{ 
\begin{tabular}{l l l l l l l l l l c}
\hline
Case  & $a$ &  $b$ & $c$ & $k$ &$c_1$ &  $c_2$ &  $\alpha$  &  $\beta$ & $\xi_0$&Figure\\[7pt] \hline \hline
&$ 0.990 $& $ 0.09 $& $1.0 $ & $-1.8 $ &$ 4.6 $&$.58 $&$.68$&  $.69$& $28$ &I.(a)\\%[7pt]%\cline{2-10}
\tiny{$\frac{1}{2}\sqrt{-\frac{c}{a k^3}}<\sqrt{-\frac{c}{ k^3}}< \sqrt{-\frac{c}{a k^3}}$}&$ 0.980 $& $ 0.09 $& $-1.6 $ & $2.1 $ &$ 2.8 $&$.58 $&$.70$&  $.75$& $28$ &I.(b)\\%[7pt] %\cline{2-10}
$a>0, \ b>0$&$ 0.989 $& $ 0.02 $& $1.7 $ & $-1.9 $ &$ 3.8 $&$-.9 $&$.82$&  $.80$& $28$ &I.(c)\\%[7pt]%\cline{2-10}
&$ 0.999 $& $ 0.02 $& $-1.7 $ & $1.9 $ &$ 3.0 $&$-.8 $&$.75$&  $.85$& $28$ &I.(d)\\\hline \hline 

&$ 1.001 $& $ 0.001 $& $1.0 $ & $-1.49 $ &$ 6.9 $&$3.4 $&$.79$&  $.69$& $12$ &II.(a)\\%\cline{2-10}
$0<\sqrt{-\frac{c}{a k^3}}<\sqrt{-\frac{c}{ k^3}}$&$ 1.002 $& $ 0.002 $& $-1.5 $ & $2.9 $ &$ .0069 $&$.0035 $&$.80$&  $.92$& $12$ &II.(b)\\[7pt] %\cline{2-10}
$a>0, \ b>0$&$ 1.001 $& $ 0.0007 $& $2.0 $ & $-2.5 $ &$ .0069 $&$-.0035 $&$.88$&  $.82$& $12$ &II.(c)\\%\cline{2-10}
&$ 1.003 $& $ 0.001 $& $-1.5 $ & $2.9 $ &$ .0068 $&$-.0034 $&$.80$&  $.90$& $12$ &II.(d)\\\hline \hline 

&$ 0.20 $& $- 3.0 $& $2.1 $ & $-2.0 $ &$ 3.1 $&$.3 $&$.70$&  $.80$& $4$ &III.(a)\\%\cline{2-10}
$0<\sqrt{-\frac{c}{ k^3}}<\frac{1}{2}\sqrt{-\frac{c}{a k^3}}$&$ 0.22 $& $ -2.4 $& $-2.0 $ & $1.4 $ &$2.9 $&$.5 $&$.80$&  $.90$& $4$ &III.(b)\\% \cline{2-10}
$a>0, \ b<0$&$ 0.23 $& $ -2.6 $& $2.2 $ & $-1.3 $ &$ 3.3 $&$-.4 $&$.90$&  $.98$& $4$ &III.(c)\\%\cline{2-10}
&$ 0.24 $& $ -2.5 $& $-2.0 $ & $1.3 $ &$ 2.9 $&$-.5 $&$.80$&  $.85$& $4$ &III.(d)\\ \hline \hline
\end{tabular}
}
\end{center}
\end{table*}
%%%%%%%%%%%%%%%%%%%%%%%%%%%%%%%%%
%\newpage
\begin{figure*}
\centering
\includegraphics[width=0.24\textwidth]{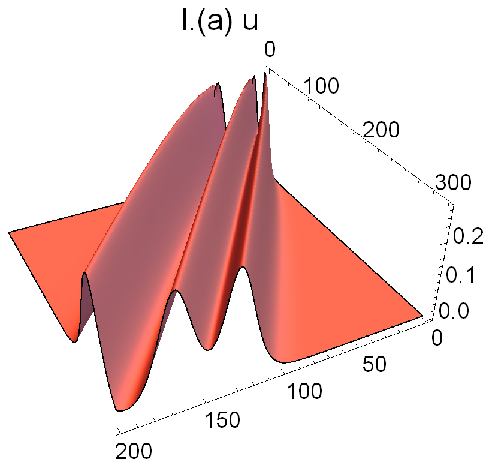}
\includegraphics[width=0.24\textwidth]{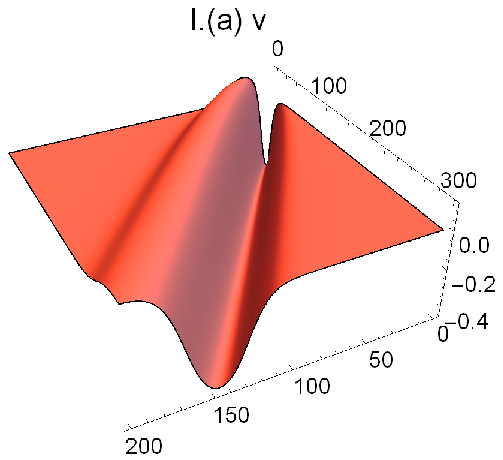}
\includegraphics[width=0.24\textwidth]{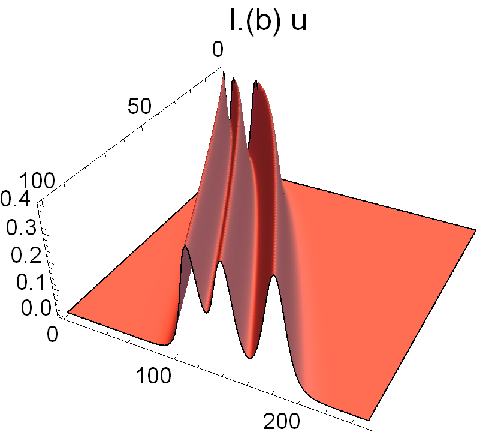}
\includegraphics[width=0.24\textwidth]{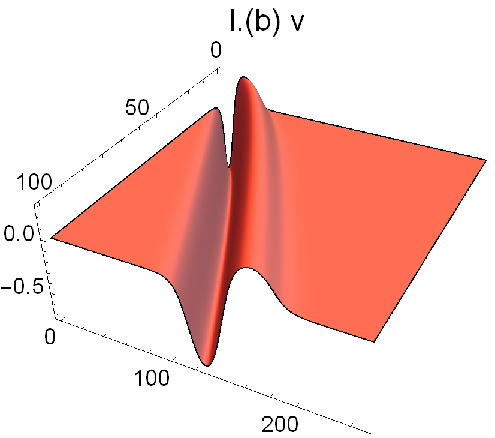}
\vskip 0.1in 
\includegraphics[width=0.24\textwidth]{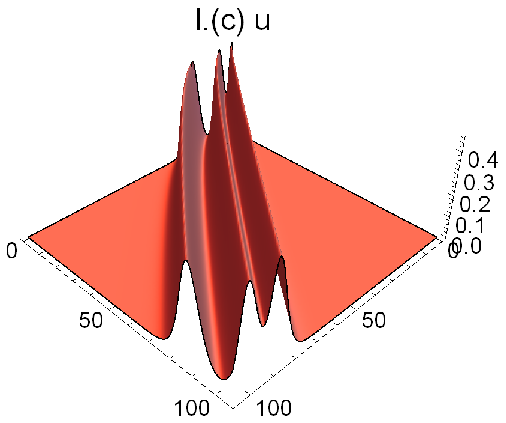}
\includegraphics[width=0.24\textwidth]{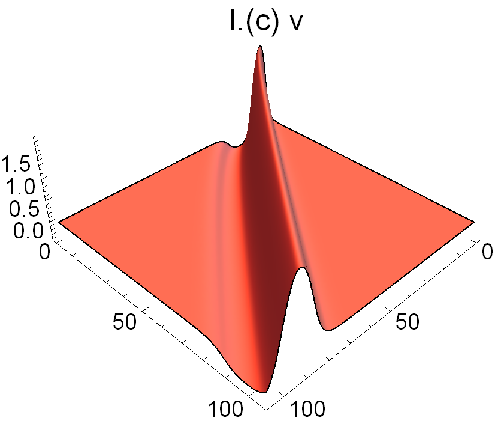}
\includegraphics[width=0.24\textwidth]{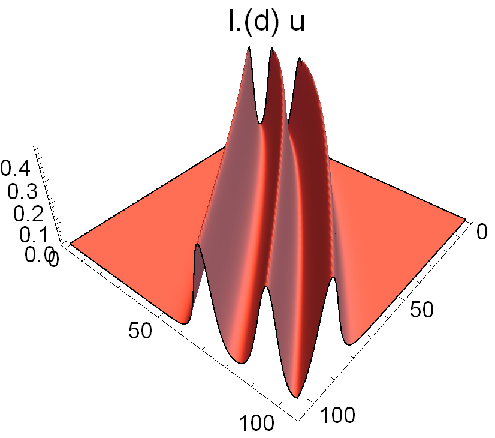}
\includegraphics[width=0.24\textwidth]{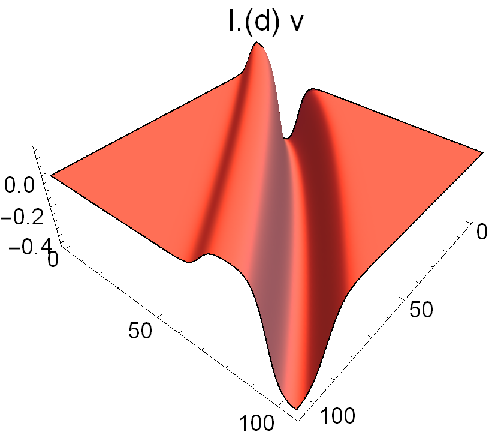}
\vskip 0.1in  
 \includegraphics[width=0.24\textwidth]{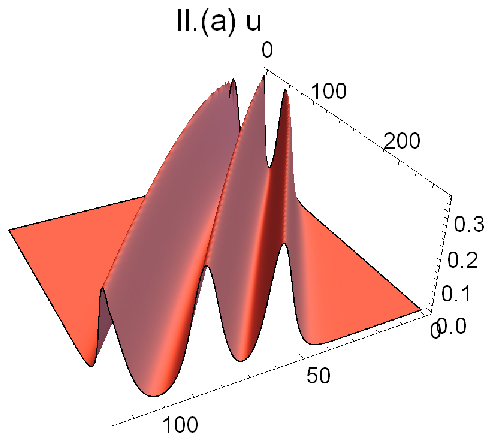}
\includegraphics[width=0.24\textwidth]{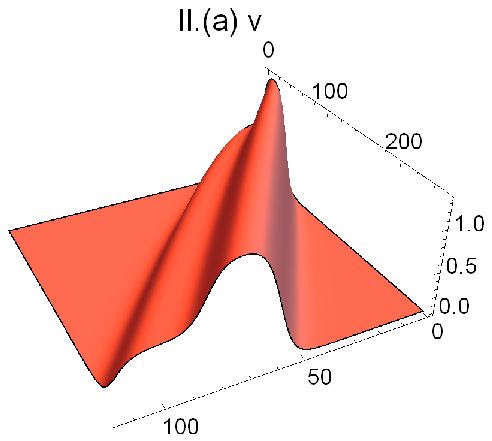}
\includegraphics[width=0.24\textwidth]{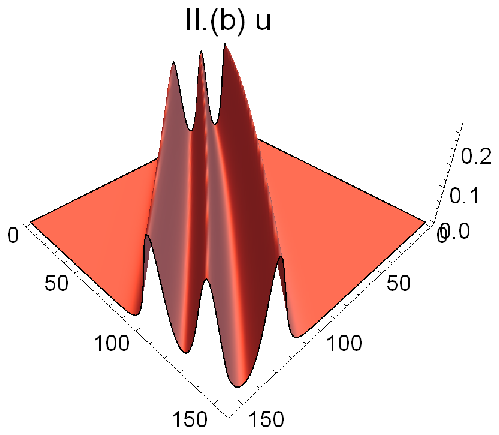}
\includegraphics[width=0.24\textwidth]{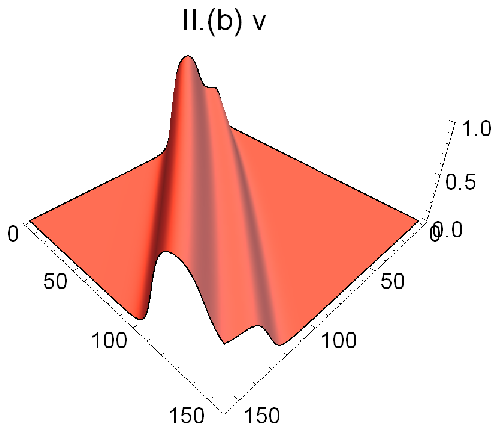}
 \vskip 0.1in 
\includegraphics[width=0.24\textwidth]{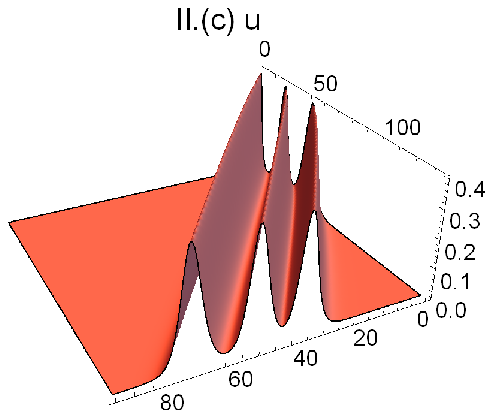}
\includegraphics[width=0.24\textwidth]{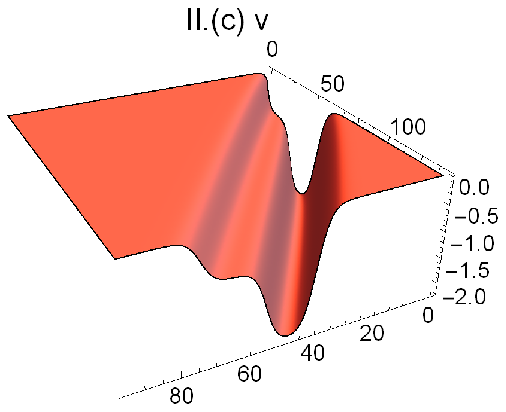}
\includegraphics[width=0.24\textwidth]{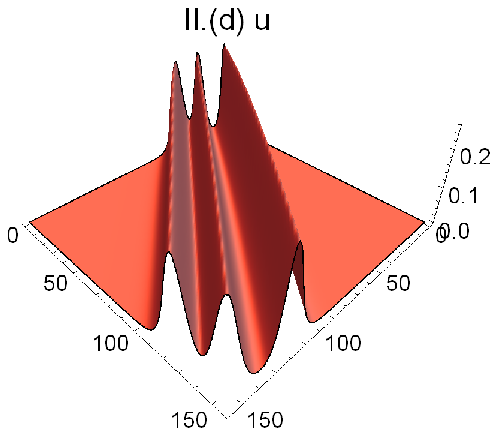}
\includegraphics[width=0.24\textwidth]{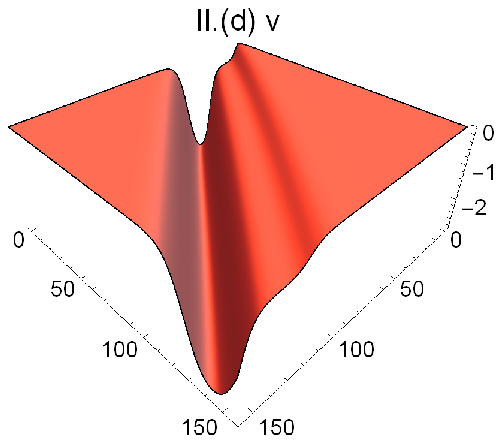}
 \vskip 0.1in  
  \includegraphics[width=0.24\textwidth]{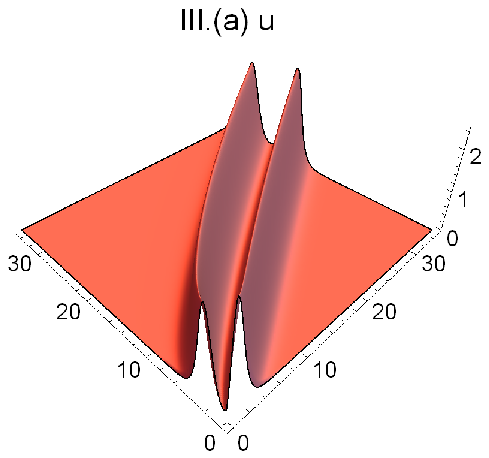}
\includegraphics[width=0.24\textwidth]{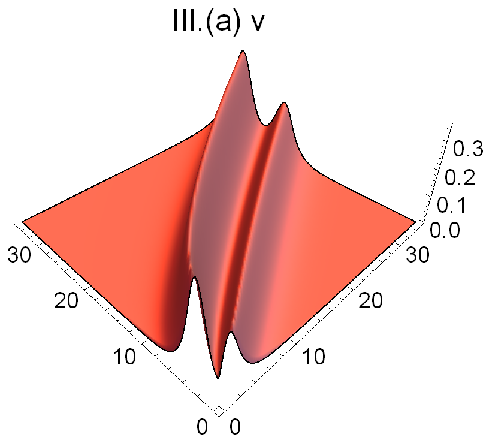}
\includegraphics[width=0.24\textwidth]{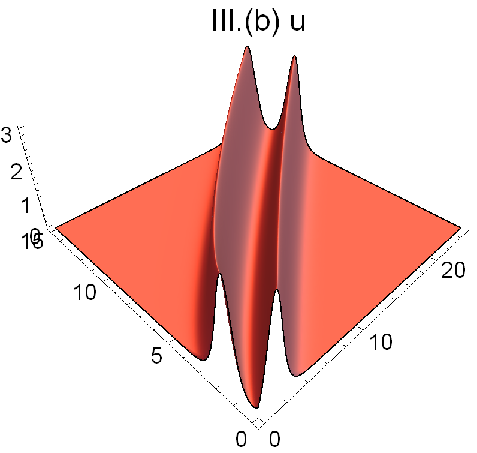}
\includegraphics[width=0.24\textwidth]{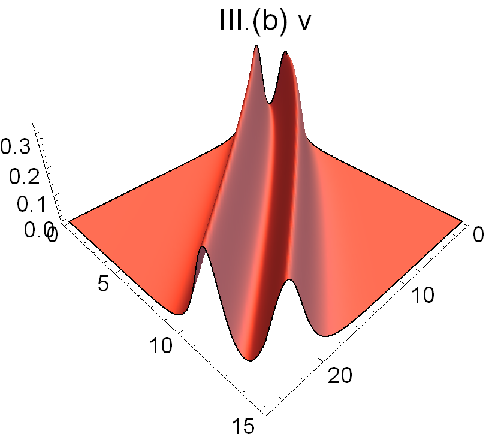}
 \vskip 0.1in 
\includegraphics[width=0.24\textwidth]{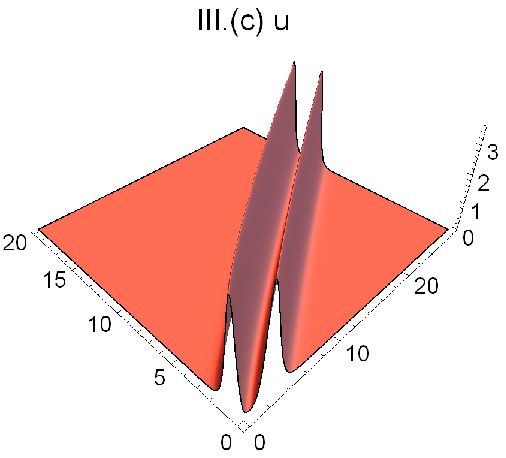}
\includegraphics[width=0.24\textwidth]{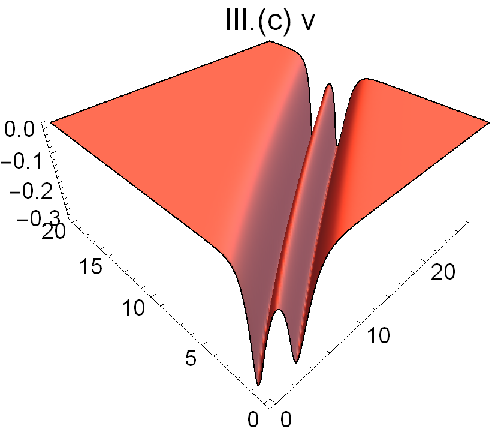}
\includegraphics[width=0.24\textwidth]{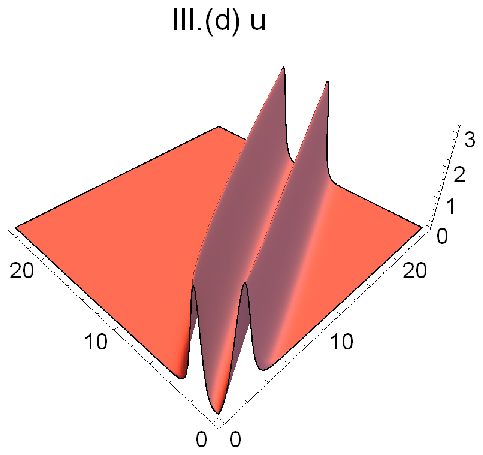}
\includegraphics[width=0.24\textwidth]{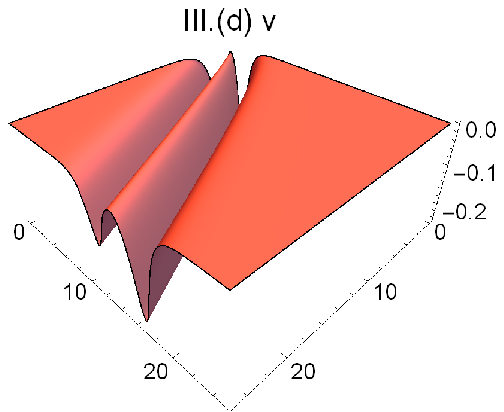}
\vskip 0.05in
\caption{Plots of solution $u(x,t),\ v(x,t)$ for the conditions presented in Theorem \ref{th2}, using values of the parameters submitted  in
   Table \ref{tab2}.}\label{fig2}       % Give a unique label
\end{figure*}
%%%%%%%%%%%%%%%%%%%%%%%%%%%%%%%%
\section{ Conclusions}\label{sec6}
In this study, the RCAM was applied to a  coupled Korteweg-de Vries equations with conformable  derivative, which is the generalisation of the mathematical model of waves of shallow water surface equations. A novel exact solution in terms of exponential function has been derived.  Additionally, a few theorems have been presented to predicts the boundedness of the obtained solution. All families of boundedness conditions on the parameters present in the equation and integration constants present in the solutions were tested by plot, affirming their correctness. It was found that there exist triple-pulse, double-pulse, and single-pulse solitons for the considered equation.   The author believes that it is a new finding for the coupled Korteweg-de Vries equations with conformable derivative.
 In addition to that here for the first time, a modification of RCAM  is proposed to deal with couple nonlinear differential equations whose auxiliary equations of linear parts  have distinct roots ($\lambda_1 \neq \lambda_2$), and successfully derived exact multi-hump solutions of coupled KdV Eq. (\ref{KDVe1}).  That extends its applicability to deal with complex nonlinear equations and produce new featured solutions like multi-hump solitons.

The reported results of the present work contribute a new  stationary wave solution  to the class of coupled nonlinear systems  with conformable derivative that may have a composition of the different multi-hump soliton-type features depending on the several restrictions in the parameters present on the equations.  Here we have successfully derived three-hump, two-hump, and one-hump soliton solutions of Eq. (\ref{KDVe1}),  but unable to obtain the solitons having four and higher-order hump due to the unavailability of more wave parameters. It is straightforward to conclude that the rapidly convergent approximation method is a powerful tool to solve various nonlinear models involving conformable derivative.

 The modification of RCAM to reveal solitary wave and multi-soliton solutions of constant and variable coefficient differential equations is significant in the studied field and might have an important impact on future research. So in the near future, the author wants to modify the scheme for exhaling solitary wave and multi-soliton solutions of constant and variable coefficient differential equations.

\section*{Acknowledgements}
The author thanks the editor and reviewers for their comments and suggestions to improve the paper in the revised form.
\section*{Conflict of interest}
The authors declare that they have no conflict of interest.
\section*{References}
% BibTeX users please use one of
%\bibliographystyle{spbasic}      % basic style, author-year citations
\bibliographystyle{iopart-num}      % IOP
\bibliography{CFKdV}   % name your BibTeX data base
\end{document}